\newtheorem{thm}{Theorem}[section]
\newtheorem{lemma}[thm]{Lemma}
\newtheorem{claim}[thm]{Claim}
\newtheorem{theorem}[thm]{Theorem}
\newtheorem*{ntheorem}{Theorem}
\newtheorem{definition}[thm]{Definition}
\newcommand{\pnq}{\mathbb{F}_q P^n}
\newcommand{\pq}{\mathbb{F}_q P^2}
\newcommand{\pip}{\;|\;}
\newcommand{\closure}{\ensuremath{\mathsf{span}}}
\newcommand{\RR}{\ensuremath{\mathbb{R}}}
\newcommand{\NN}{\ensuremath{\mathbb{N}}}
\newcommand{\rank}{\mathsf{rank}}
\newcommand{\linspan}{\ensuremath{\mathsf{span}}}
\newcommand{\FF}{\mathbb{F}}
\newcommand{\suchthat}{\mathrel{:}}
\newcommand{\poly}{\operatorname{poly}}
\newcommand{\polylog}{\operatorname{polylog}}
\newcommand{\abs}[1]{\lvert#1\rvert}
\newcommand{\inner}[2]{#1\cdot #2}
\newcommand{\ps}[2]{\FF_{#1} P^{#2}}
\newcommand{\dm}{n}
\newcommand{\dimension}{\ensuremath{\mathsf{dim \:}}}
\def\final{1}  
\newcommand{\lnote}[1]{[{\small Luis: \bf #1}]\marginpar{*}}
\newcommand{\nnote}[1]{[{\small Navin: \bf #1}]\marginpar{*}}
\newcommand{\snote}[1]{[{\small Santosh: \bf #1}]\marginpar{*}}
\newcommand{\sidecomment}[1]{\marginpar{\tiny #1}}
\newcommand{\details}[1]{[[#1]]}
\newcommand{\lnote}[1]{}
\newcommand{\nnote}[1]{}
\newcommand{\snote}[1]{}
\newcommand{\sidecomment}[1]{}
\newcommand{\details}[1]{}
\newcommand{\email}[1]{\href{mailto:#1}{\texttt{#1}}}
\newcommand{\bigdotcup}{\ensuremath{\mathaccent\cdot{\bigcup}}}
\title{Query complexity of sampling and small geometric partitions}
\date{}
\author{Navin Goyal\\
Microsoft Research India \\
\email{navingo@microsoft.com}
\and
Luis Rademacher \\
Computer Science and Engineering \\
Ohio State University \\
\email{lrademac@cse.ohio-state.edu}
\and
Santosh Vempala \\
College of Computing \\
Georgia Institute of Technology \\
\email{vempala@cc.gatech.edu}}
\begin{document}

\maketitle
\begin{abstract}
In this paper we study the following problem:

\emph{Discrete partitioning problem (DPP):} Let $\pnq$ denote the $n$-dimensional finite projective space over
$\FF_q$. For positive integer $k \leq n$, let $\{ A^i\}_{i=1}^N$ be a partition of $(\pnq)^k$ such that
\begin{enumerate}
\item for all $i \leq N$, $A^i = \prod_{j=1}^k A^i_j$ (partition into product sets),
\item for all $i \leq N$, there is a $(k-1)$-dimensional subspace $L^i \subseteq \pnq$ such that $A^i \subseteq (L^i)^k$.
\end{enumerate}
What is the minimum value of $N$ as a function of $q,n,k$? We will be mainly interested in the case $k=n$.

DPP arises in an approach that we propose for proving lower bounds for the query complexity of generating random points from convex bodies. It is also related to other partitioning problems in combinatorics and complexity theory.
We conjecture an asymptotically optimal partition for DPP and show that it is optimal in two cases: When the dimension is low ($k=n=2$) and when the factors of the parts are structured, namely factors of a part are close to being a subspace. These structured partitions arise naturally as partitions induced by query
algorithms.
Our problem does not seem to be directly amenable to previous techniques for partitioning lower bounds such as rank
arguments, although rank arguments do lie at the core of our techniques.
\end{abstract}

\section{Introduction}\label{sec:intro}
In this paper we study the following problem:
\paragraph{Discrete partitioning problem (DPP).} Let $\pnq$ denote the $n$-dimensional finite projective space over
$\FF_q$ (see Sec.~\ref{sec:prelims} for a quick introduction to finite projective spaces and some related definitions). For positive integer $k \leq n$, let $\{ A^i\}_{i=1}^N$ be a partition of $(\pnq)^k$ such that
\begin{enumerate}
\item for all $i \leq N$, $A^i = \prod_{j=1}^k A^i_j$ (partition into product sets),
\item for all $i \leq N$, there is a $(k-1)$-dimensional subspace $L^i \subseteq \pnq$ such that $A^i \subseteq (L^i)^k$.
\end{enumerate}
What is the minimum value of $N$ as a function of $q,n,k$? We will be mainly interested in the case $k=n$.

DPP seems interesting in its own right and several related problems have been studied in the past; we 
discuss these later.
Before stating our results for DPP we discuss another motivation for studying it.
DPP arises in our approach for proving lower bounds for the query complexity of random sampling from convex bodies.
It is standard in this problem to give the convex body to an algorithm as a membership oracle, that is, a black box that when queried with a point in $\RR^n$ answers YES if the point is in the body and answers NO if the point is outside the body (e.g., \cite{GLS, lovasz2006fal}).
Given a convex body $K \in \RR^n$ via a membership oracle, by sampling from $K$ we mean
generating a random point from $K$ whose distribution is approximately uniform.
Sampling is one of the most useful primitives in the algorithmic theory of convex bodies
 (e.g., \cite{lovasz2006fal, DBLP:journals/jacm/BertsimasV04}). The currently best known algorithm \cite{lovasz2006fal} for sampling makes $O(n^4)$ membership oracle queries to generate one random point.
Improving this bound will directly improve the complexity of algorithms for volume computation and convex optimization.
On the other hand, the best known lower bound is just $\Omega(n)$. Thus, understanding the query complexity of sampling is an important problem.
 Notice that we are working with oracle algorithms, and so the lower bounds are on the query complexity and not on the computational complexity of sampling.


In this paper, we propose an approach for proving an $\Omega(n^2)$ lower bound on the query complexity of sampling.
The approach, discussed in Appendix~\ref{sec:reduction}, involves proving a lower bound on the number of queries for a problem that we call SPAN: Given $n-1$ vectors in $\RR^n$ via a natural oracle, find a hyperplane close to all of them. The problem SPAN reduces efficiently to sampling from convex bodies, so that a lower bound for SPAN implies a lower bound for sampling. Randomized oracle algorithms can be interpreted as distributions over deterministic decision trees. As is standard in lower bounds for randomized decision trees, it suffices to prove a lower bound on the size of a partition of the input space induced at the leaves of any small-depth deterministic decision tree with the following property: In most parts of this partition the value of the function being computed is nearly constant.
We call the problem of lower-bounding the size of this partition the \emph{continuous partitioning problem} (CPP).
While we do not solve CPP, we get insights into it by formulating an analogue of SPAN and its associated partitioning problem over finite fields and proving results in this setting. The rest of the introduction is devoted to discussion of these discrete problems. 



As the continuous problem SPAN only cares about the linear span of the input vectors, it is more conveniently stated not in a vector space but in the corresponding projective space, the space of all lines through the origin. The same can be said about the discrete analogue. Working over projective space makes counting arguments simpler.

\paragraph{Discrete span problem (DSPAN).} 
The input consists of $n$ points $v_1, \ldots, v_n \in \pnq$, where $\pnq$ is the finite projective space of dimension $n$ over the finite field $\FF_q$. 
The input can only be accessed via the following oracle: A query $x$ is an $(n-1)$-dimensional flat in $\pnq$; if $x$ contains all the points then the oracle answers YES, else it gives the least index $i$ such that $v_i$ does not lie in $x$. 
The problem is to find an $(n-1)$-flat (this is an abbreviation for $(n-1)$-dimensional subspace) containing all $v_i$'s.
The discrete SPAN problem is easily solved with $O(qn^2)$ queries using a deterministic algorithm.

We interpret algorithms for such a problem as randomized decision trees, namely a distribution on (deterministic) decision trees.
The leaves of such a deterministic decision tree induce a partition of $(\pnq)^n$, and the problem of lower-bounding the size of this partition is the discrete partitioning problem (DPP) stated at the beginning of this paper. 
The oracle described may seem a bit unnatural at first. 
It is motivated by the continuous problem and is chosen to be a mild strengthening of the ``membership oracle'' (which, in this case would just answer whether or not all $v_i$'s lie in $x$). A lower bound under the stronger oracle is also a valid lower bound for the weaker membership oracle because the algorithm can always ignore the additional information provided by the stronger oracle. At the same time, the strengthening adds the property that the parts of the induced partition are product sets 
(see \cref{sec:reduction} 
for more details).\footnote{In other words: 
For the DSPAN problem with membership oracle, parts are not necessarily product sets; 
when a membership query results in NO, we learn that some input vector $v_i$ is not in the queried hyperplane and the set of tuples consistent with this is not a product, it is actually the complement of a product. After $h$ queries, the part is the intersection of some product sets, resulting from YES queries, minus the union of some other product sets, resulting from NO queries. In particular, it is a product set minus the union of at most $h$ product sets. It is easy to show that any such set can be partitioned into $n^h$ product sets; the modified oracle is one way of showing this in our case.}
Lower bounds for partitions with product parts seem easier to prove than the general case and the product property is used crucially in our proofs.
Each such product is of the form $A_1 \times A_2 \times \dotsb \times A_n$, such that there is an $(n-1)$-flat $F \subset \pnq$ with $A_1 \times A_2 \times \dotsb \times A_n \subseteq F^n$. Moreover, each $A_i$ is somewhat structured: It can be represented as a flat minus a small number of other flats; thus each $A_i$ is close to a flat. See Lemma \ref{lem:flatminusflats} for a precise statement.

There are a few ways of formally relating DPP and DSPAN that we will sketch now.
A simple but weak way is to consider DSPAN and use Yao's minimax principle with cost giving the probability of failure of the randomized algorithm \cite[Section 2.2.2]{MRaghavan}, reducing the lower bound problem to proving a lower bound on the expected running time of a deterministic decision tree as in DSPAN (with the input drawn from a probability distribution) that is allowed to err with a small probability.
This leads to a variation of DPP with condition (2) relaxed so that each part is not necessarily \emph{fully} contained in the power of a $(k-1)$-flat, but only \emph{mostly} contained in such a flat.
In this paper we do not address this harder version of DPP.
A stronger connection is given by first observing that the correctness of a solution to DSPAN can be verified efficiently by querying the conjectured solution: The solution is correct iff the oracle answers YES.
Thus, the worst-case expected\footnote{Worst case over inputs of a given length, expected over the randomness of the algorithm.} running time of the best Las Vegas (i.e. always correct) algorithm is within a constant factor of the best Monte Carlo (i.e. correct with some probability) algorithm \cite[Exercise 1.3]{MRaghavan}.
That is, it is enough to prove a lower bound on the complexity of Las Vegas algorithms.
The use of Yao's minimax principle with cost equal to the running time reduces the lower bound for DSPAN problem to proving a lower bound on the average running time of deterministic algorithms against some input distribution (uniform, in our case), that is, a lower bound on the average depth of leafs (according to the input distribution).
For clarity we focus on the number of leafs in the main statement, but we actually prove that most (all but nearly a $1/q$ fraction) leafs are small (according to the input distribution), see Lemma \ref{lem:almostFlat} and the proof of Theorem \ref{thm:general} for a precise statement, as well as Section \ref{sec:techniques} for an overview of the argument.




Let us make some easy observations about DPP. The kind of partitions we are looking for always exist: Take any element
$(p_1, \ldots, p_k)$ in $(\pnq)^k$, where each $p_i$ is a point in $\pnq$. Together $p_1, \ldots, p_k$
span a $(k-1)$-flat. Thus the trivial partition in which each part is a singleton is a valid partition, giving an
upper bound on $N$ of size $((q^{n+1}-1)/(q-1))^k$, the total number of elements in $(\pnq)^k$. 
\details{Multivariate big-O implies there exists $M>0$ such that for $q,k,n \geq M$... So it is wrong to use this later with $k=n=2$. But $k=n$ is fine.}
For $q > k$, this is at most $eq^{kn}$.
\details{For $q > k$, it is at most $(q^n(1+1/q+1/q^2+\dotsb+1/q^n))^k \leq (q^n(1+1/q+1/q^2+\dotsb))^k = q^{kn} (1+1/(q-1))^k \leq e q^{kn}$.}

A lower bound
of $\Omega(q^{k(n-k+1)})$ (again assuming $q > k$) is obtained by a volume argument: The number of elements in $(\pnq)^k$,
as we noted, is $((q^{n+1}-1)/(q-1))^k$. The maximum number of elements in a part is 
$((q^{k}-1)/(q-1))^k$
;
this is because each factor is contained in a $(k-1)$-flat which has $(q^{k}-1)/(q-1)$
points. Thus $N$ must be at least $(\frac{q^{n+1}-1}{q-1})^k \cdot (\frac{q-1}{q^k-1})^k$. 
For $q>k$ this is at least $q^{kn}/eq^{k(k-1)})  \geq q^{k(n-k+1)}/e$.
Note that if we just wanted to cover instead of partition,
then $\Theta(q^{k(n-k+1)})$ is the tight upper and lower bound (when $q>k$): 
The covering given by the $k$th powers of all $(k-1)$-flats achieves the upper bound---it is well known that the number of $(k-1)$-flats in $\pnq$ is 
$\frac{(q^{n+1}-1)(q^{n+1}-q)\cdots (q^{n+1}-q^{k-1})}{(q^k-1)(q^k-q)\cdots (q^k-q^{k-1})} = eq^{k(n-k+1)}$.\details{Get $\leq((q^{n-k+2}-1)/(q-1))^k$ using $(a+t)/(b+t) \leq a/b$ whenever $a\geq b>0$ and $t\geq 0$. Then use previous paragraph.}

For the case $k = n$, the upper and lower bounds above become $O(q^{n^2})$ and $\Omega(q^{n})$.

\subsection{Related work}
Problems with similar flavor, namely finding a small partition of a product set into product sets with certain properties, abound in communication complexity, and are also studied in combinatorics.
Many techniques used to prove such lower bounds actually prove lower bounds on the covering number, with a few exceptions, such as the rank method~\cite{KushilevitzN} and certain lower bounds on the non-negative rank~\cite{1210.6970, DBLP:journals/eccc/BraunJLP13}; see also \cite{jainklauck} for some more recent work on partition lower bounds.
The covering problem is easy in our setting but the smallest covering seems to be much smaller than the smallest partition and thus does not provide insight into the size of the smallest partition. Our problem does not seem to be directly amenable to rank arguments or other techniques, although rank arguments do lie at the core of our techniques.
We now discuss some specific results related to our topic.

Alon et al.~\cite{AlonBHK02} consider the problem of partitioning a finite set
$A = A_1 \times \dotsb \times A_n$ (where $|A_i| \geq 2$ for all $i$) into parts of the form $B_1 \times \dotsb \times B_n$, where
$\emptyset \neq B_i \subsetneq A_i$ for $i = 1, \dotsc, n$.  They show that any such partition has size at least $2^n$. Our problem (DPP) is essentially a $q$-analogue of their problem.

Razborov~\cite{Razborov90} considers a more general partitioning problem in the context of formula complexity, albeit only for $k=2$. Briefly, suppose we have a covering of a set $U \times V = \cup_i \: C^i_1 \times C^i_2$. We say that a partition $\{A^i_1 \times A^i_2\}$  of $U \times V$ (so $\dot\cup_i\: A^i_1 \times A^i_2 = U \times V$, where $\dot\cup$ denotes disjoint union) is a refinement of the covering $\{C^i_1 \times C^i_2\}$ if for each part $A^i_1 \times A^i_2$ there is a $j$ such that $A^i_1 \times A^i_2 \subseteq C^j_1 \times C^j_2$.
Razborov considers the problem of proving a lower bound on the size of partitions refining certain coverings.
Clearly, our problem for $k=2$ is such a problem, as our partitions refine the covering of the $k$th powers of $(k-1)$-flats.
Razborov gives a method of proving lower bounds for the size of such partitions. This method seems to be specific to the $k=2$ case; for $k=2$, specialized to our problem, this method does not seem to give a bound better than $\Omega(q^2)$.

A lower bound for DPP would imply a lower bound for a deterministic \emph{number in hand} multiparty communication complexity problem (see \cite{KushilevitzN} for an account of communication complexity):
There are $k$ players; each player is given a private (unknown to other players) point from $\pnq$.
The players want to determine a $(k-1)$-flat containing the points of all the players. Notice that the output of the communication problem is not unique, and thus here we are interested in the communication complexity of a relation rather than that of a function.

Our problem fits into the
category of problems where one obtains a discrete model of a problem over the real field by changing the real field to a finite field. There are many examples of this interaction between the continuous and the discrete: The Kakeya problem over finite fields is one recent example with connections to the theory of computing; see, e.g., \cite{Dvir09}. Here also the problem becomes more tractable in the finite field setting.

\subsection{Our results}\label{sec:ourresults}

For $k=n=2$, the upper and lower bounds in Sec.~\ref{sec:intro} for the general problem become
$O(q^4)$ and $\Omega(q^2)$.
The truth turns out to be $\Theta(q^3)$:

\begin{theorem} \label{thm:product2}
In the discrete partitioning problem for $k=n=2$ the size of the smallest partition satisfies $N = \Theta(q^3)$.
\end{theorem}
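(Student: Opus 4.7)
The plan is to give matching upper and lower bounds of order $q^3$.

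For the upper bound I would produce an explicit partition of size $(q+1)(q^2+q+1)$. Fix, for every point $p \in \pq$, an arbitrary line $L_0(p)$ through $p$, and for each of the $q+1$ lines $L$ through $p$ form the product part $\{p\} \times A_{p,L}$, where $A_{p,L} = L$ when $L = L_0(p)$ and $A_{p,L} = L \setminus \{p\}$ otherwise. Because every $r \in \pq$ lies on a unique line through $p$, these parts partition $\{p\} \times \pq$, and as $p$ ranges over $\pq$ they partition $(\pq)^2$. Each part is contained in $L \times L$ for the line $L$ used in its construction, and the total count is $(q+1)(q^2+q+1) = O(q^3)$.

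For the lower bound I would argue one line at a time. Let $\{A^i = A^i_1 \times A^i_2\}_{i=1}^N$ be any valid partition with $A^i \subseteq L^i \times L^i$. For each line $L \subseteq \pq$ set $S_L = \bigcup_{i : L^i = L} A^i$ and $N_L = |\{i : L^i = L\}|$, so that the $N_L$ rectangles partition $S_L$. The key structural observation is that the line through two distinct points of $\pq$ is unique, which forces every off-diagonal pair $(p_1,p_2) \in L \times L$ to lie in a part with $L^i = L$. Hence $S_L = (L \times L) \setminus \{(p,p) : p \in E_L\}$ for some $E_L \subseteq L$. Writing $f(L) = (q+1) - |E_L|$ for the number of diagonal pairs of $L \times L$ captured by $S_L$, the fact that every diagonal pair $(p,p)$ lies in exactly one $S_L$ gives $\sum_L f(L) = |\pq| = q^2+q+1$.

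The main step is a per-line rank bound. Viewed as a $(q+1) \times (q+1)$ matrix, the indicator of $S_L$ equals $J - D$, where $J$ is the all-ones matrix and $D$ is diagonal with ones on $E_L$. A short row-space computation over $\RR$ shows $\rank(J - D) \geq (q+1) - f(L)$, since the row space contains $\mathbf{1}$ together with $\{e_p : p \in E_L\}$. As each rectangle $A^i_1 \times A^i_2$ has a rank-$1$ indicator and the $N_L$ of them partition $S_L$, subadditivity of rank yields $N_L \geq (q+1) - f(L)$. Summing over the $q^2+q+1$ lines gives
\[
N \;\geq\; \sum_L \bigl((q+1) - f(L)\bigr) \;=\; (q+1)(q^2+q+1) - (q^2+q+1) \;=\; q(q^2+q+1) \;=\; \Omega(q^3).
\]

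The hard part will be locating the missing factor of $q$: the naive volume bound loses it because it ignores the structural constraint that $S_L$ must contain the entire off-diagonal of $L \times L$. The rank computation is what converts this rigidity into the extra factor, and averaging over lines makes the total diagonal defect $\sum_L f(L) = O(q^2)$ a lower-order correction.
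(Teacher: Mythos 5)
Your proof is correct, and the lower bound takes a genuinely different (and in fact slightly tighter) route than the paper's, even though the rank core is shared. The paper first replaces an arbitrary partition by a \emph{canonical} one in which every part has equal or disjoint factors (losing a factor of $4$), bounds the number of square parts by $q^2+q+1$, and then works line by line: the square parts induce a partition of each line $L$ into $\phi(L)$ classes, the pairs of $L\times L$ not covered by square parts form a complete multipartite pattern, a Graham--Pollak rank lemma (complete bipartite graph minus a perfect matching needs $n$ bicliques) forces at least $\phi(L)$ non-square parts per line, and a separate counting lemma gives $\sum_L \phi(L)\geq q(q^2+q+1)$. You skip canonicalization entirely: you assign every part to its witnessing line, note that uniqueness of the line through two distinct points forces the whole off-diagonal of $L\times L$ into parts assigned to $L$, so the union of those parts is $J-D$ with only a partial diagonal removed, and the rank bound together with the global accounting $\sum_L f(L)=q^2+q+1$ yields $N\geq q(q^2+q+1)$ directly, with no constant-factor loss. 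What the paper's detour buys is the explicit square/non-square dichotomy and the decomposition lemma; your bookkeeping via the diagonal defect is shorter and gives a cleaner constant, and, like the paper's argument, it uses only the combinatorial axioms of a projective plane, not the algebraic structure. One cosmetic point: when $E_L$ is all of $L$ no all-ones row appears, but then $J-D=J-I$ is nonsingular over $\RR$ (or take the sum of all rows to recover $\mathbf{1}$), so the bound $\rank(J-D)\geq (q+1)-f(L)$ holds in every case.
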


For the general problem, we get an upper bound improving the trivial upper bound from
Sec.~\ref{sec:intro}, and generalizing the upper bound in Theorem~\ref{thm:product2}:
\begin{theorem}\label{thm:upper}
The discrete partitioning problem for $k=n$ and $q \geq 2n$ has a partition of size $q^{\binom{n+1}{2}}(1 + O(n/q))$.
\end{theorem}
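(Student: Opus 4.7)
The plan is to construct the partition explicitly via a greedy chain-building procedure. Fix once and for all an extension operator $\mathrm{ext}$ that sends each flat $F \subsetneq \pnq$ of dimension $d < n-1$ to some $(d+1)$-flat containing $F$ (any rule, e.g.\ lex-smallest, works). Given a tuple $(v_1, \ldots, v_n) \in (\pnq)^n$, inductively build a flag $F_0 \subsetneq F_1 \subsetneq \cdots \subsetneq F_{n-1}$ with $\dim F_i = i$ and a case sequence $c \in \{a, b\}^{n-1}$ by the rule: $F_0 = \{v_1\}$, and for $i = 1, \ldots, n-1$, if $v_{i+1} \notin F_{i-1}$ put $c_i = b$ and $F_i = \mathrm{span}(F_{i-1}, v_{i+1})$, otherwise put $c_i = a$ and $F_i = \mathrm{ext}(F_{i-1})$. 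Each tuple determines a unique $(C, c) = ((F_0, \ldots, F_{n-1}), c)$, so the preimages partition $(\pnq)^n$.

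The key claim is that for every pair $(C, c)$ satisfying the compatibility condition ``$c_i = a$ implies $F_i = \mathrm{ext}(F_{i-1})$,'' the preimage of $(C, c)$ is the product set
\[
P_{C,c} = \{v_1\} \times B_1 \times \cdots \times B_{n-1}, \qquad B_i = \begin{cases} F_i \setminus F_{i-1} & \text{if } c_i = b, \\ F_{i-1} & \text{if } c_i = a, \end{cases}
\]
which is contained in $(F_{n-1})^n$, the $n$-th power of the $(n-1)$-flat $F_{n-1}$. This is verified by a direct induction on $i$, so both conditions of DPP are automatically satisfied. The main conceptual step of the whole argument is really just the discovery of this chain-building rule: the ``non-generic'' tuples (those with some $v_{i+1} \in F_{i-1}$) would naively create many extra parts, but routing them through the fixed $\mathrm{ext}$ collapses them into case-$a$ steps that each contribute only a factor of $1$ to the count rather than $\approx q^{n-i}$, which is what keeps the overhead small.

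What remains is counting compatible $(C, c)$. For each $c \in \{a,b\}^{n-1}$, a compatible chain is specified by $F_0$ (with $(q^{n+1}-1)/(q-1)$ choices) together with, at each step $i$, either the single forced flat $\mathrm{ext}(F_{i-1})$ (if $c_i = a$) or any of the $(q^{n-i+1}-1)/(q-1)$ $i$-flats through $F_{i-1}$ (if $c_i = b$). Summing over $c$ factors as
\[
\frac{q^{n+1}-1}{q-1}\prod_{i=1}^{n-1}\left(1 + \frac{q^{n-i+1}-1}{q-1}\right).
\]
For $q \geq 2n$, the leading factor is $q^n(1 + O(1/q))$ and the $i$-th factor is $q^{n-i}(1 + O(1/q))$, so the product equals $q^{n + (n-1) + \cdots + 1}(1 + O(n/q)) = q^{\binom{n+1}{2}}(1 + O(n/q))$, as claimed.
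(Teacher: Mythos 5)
Your construction is correct and is essentially the paper's: both build the parts coordinate by coordinate along a flag $F_0\subset F_1\subset\cdots\subset F_{n-1}$, with the $i$-th factor (generically) an $i$-flat through the span of the earlier coordinates minus that span, and both count roughly $q^{n}\cdot q^{n-1}\cdots q$ choices of the flag. The only difference is the bookkeeping for degenerate tuples --- the paper's ``partition around a flat'' keeps one full $i$-flat as a part at each level, while you route them through case-$a$ steps with a fixed extension operator --- which changes the count only in lower-order terms (your factor $1+\frac{q^{n-i+1}-1}{q-1}$ versus the paper's $\frac{q^{n-i+1}-1}{q-1}$), so both yield $q^{\binom{n+1}{2}}(1+O(n/q))$.
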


In the previous theorem, the partition is made of parts whose factors are either a flat or a flat minus a lower dimensional flat, what we call an \emph{almost-flat}. For partitions of this kind we have a lower bound that matches our upper bound up to a multiplicative constant for $q \geq n$, and the constant approaches 1 for large $q$:
\begin{theorem}[partitioning lower bound for almost-flats]\label{thm:general}
For the discrete partitioning problem, if $k=n$ and each factor of every part is an almost-flat, then the partition size satisfies
\[
N \geq q^{\dm(\dm+1)/2} \left(1-\frac{1}{q} \left(\frac{q+1}{q-2}\right)^{\dm}\right) .
\]
\end{theorem}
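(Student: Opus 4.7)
Use a double-counting argument against flag tuples, combined with an induction on dimension via projection.

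Let $V \subseteq (\pnq)^n$ be the set of \emph{flag tuples} $(v_1, \dots, v_n)$ satisfying $v_j \notin \linspan(v_1, \dots, v_{j-1})$ for each $j$. A direct computation gives
\[
|V| \;=\; \prod_{j=1}^n \frac{q^{n+1} - q^{j-1}}{q-1},
\]
which equals $|(\pnq)^n|$ times a factor at least $1 - O(1/q^2)$. The first structural observation is that whenever a flag tuple $v$ belongs to a part $A^i$, its coordinates must span all of $L^i$: the span is already $(n-1)$-dimensional and contained in the $(n-1)$-flat $L^i$, so equality holds. Hence flag tuples are grouped by their span across the parts, and each $(n-1)$-flat $L$ labels a family of parts $\{A^i : L^i = L\}$ whose union accounts for every flag tuple spanning $L$.

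The main technical step is an inductive bound on the number of flag tuples contributed by each part. After fixing $v_1 \in A^i_1 = F^i_1 \setminus G^i_1$, counting flag extensions $(v_2, \dots, v_n)$ with $v_j \in A^i_j$ and $v_j \notin \linspan(v_1, \dots, v_{j-1})$ reduces, via projection modulo $\linspan(v_1)$, to a counting problem inside the projective quotient $L^i/\linspan(v_1)$, which has dimension $n-2$. The projection of an almost-flat in $L^i$ is again an almost-flat in the quotient (the image of a flat is a flat of equal or one-less dimension, and the subtracted flats project similarly), so the inductive hypothesis applies. Each inductive step contributes a multiplicative factor of roughly $(q+1)/(q-2)$ that absorbs the deleted flats $G^i_j$ and the mild overcounting from projection (a point of the quotient has $q$ preimages, intersected with the corresponding almost-flat). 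Combining the inductive bound with $\sum_i |V \cap A^i| = |V|$, one recovers
\[
N \;\geq\; q^{n(n+1)/2}\left(1 - \frac{1}{q}\left(\frac{q+1}{q-2}\right)^{\!n}\right).
\]

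\textbf{Main obstacle.} The delicate point is that a per-part bound cannot hold uniformly: a single ``full'' part $A^i = (L^i)^n$ is itself an almost-flat product, yet contains $\sim q^{n(n-1)}$ flag tuples, vastly exceeding the target $q^{n(n-1)/2}$. What rescues the argument is that the partition property forces global disjointness across distinct flats $L, L'$: their $(n-2)$-flat intersection $L \cap L'$ cannot have its $n$-fold power simultaneously absorbed by two ``large'' parts indexed by $L$ and $L'$. The core challenge is therefore to package this global disjointness, together with the almost-flat hypothesis on the factors, into the inductive weighted bound so that the projection step produces precisely the factor $(q+1)/(q-2)$ appearing in the error term, rather than a larger loss that would weaken the theorem.
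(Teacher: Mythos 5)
You have put your finger on the fatal obstacle yourself, and the proposal does not overcome it. A per-part upper bound on flag (independent) tuples cannot hold: a single part $(L)^n$, with $L$ an $(n-1)$-flat, is a perfectly legitimate almost-flat product containing roughly $q^{n(n-1)}$ flag tuples, so the double count $\sum_i |V\cap A^i| = |V| \approx q^{n^2}$ can only yield the trivial $N \gtrsim q^{n}$, not $q^{n(n+1)/2}$. The appeal to ``global disjointness'' across distinct hyperplanes $L,L'$ is never turned into a quantitative statement, and it is unclear that it can be: nothing in the hypotheses prevents a partition from containing one full part per hyperplane together with many tiny parts, so any successful argument must charge large parts against some globally scarce resource rather than try to cap their flag-tuple count. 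The proposal is therefore an outline whose central difficulty is left unresolved; the claimed inductive factor of $(q+1)/(q-2)$ per projection step and the final ``one recovers'' are assertions, not derivations.

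The paper's proof uses a genuinely different accounting, and this is precisely the missing idea: it counts \emph{dependent} tuples inside large parts instead of independent tuples per part (a corruption-style bound). Lemma~\ref{lem:almostFlat} --- whose proof is the bulk of the work, via a partition into minimal dimension patterns, quotients by the span of the point factors, and the Sylvester--Gallai-type Lemma~\ref{lem:sylvester} feeding the almost-line estimate of Lemma~\ref{lem:lines} --- shows that every ``non-dominated'' part (dimension pattern not coordinatewise below $(0,1,\dotsc,n-1)$, which includes your problematic full part) has at least a $\frac{1}{q+1}\bigl(\frac{q-2}{q+1}\bigr)^{n-1}$ fraction of dependent tuples, whereas the whole space $(\pnq)^n$ has dependent density at most $\frac{1}{q(q-1)}$. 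Disjointness of the parts then forces all non-dominated parts together to cover at most a $\frac{1}{q}\bigl(\frac{q+1}{q-2}\bigr)^{n}$ fraction of all tuples; the remainder must be covered by dominated parts, each of cardinality at most $1\cdot(q+1)\dotsm(q^{n-1}+\dotsb+1)\approx q^{n(n-1)/2}$, which gives the stated bound. Since the dependent-density lower bound for structured parts is entirely absent from your argument, the proposal does not constitute a proof of the theorem.
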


Another motivation for studying such structured partitions comes from the fact that the partitions induced by decision trees for the DSPAN problem involve parts whose factors are flats minus a small number of flats. This is shown in Section \ref{sec:structure}. Our proof of Theorem~\ref{thm:general} does not seem to 
immediately generalize to this case.

Our approach for DPP, namely the idea of using the fraction of dependent tuples as a parameter of a part to lower bound the
size of the partition in Theorem~\ref{thm:general}, suggests using a similar idea for CPP, perhaps the density of ``approximately dependent'' tuples. While there remain technical difficulties in carrying out this approach in the continuous setting, it
appears promising, and is the direct result of considering DPP.

\subsection{Techniques} \label{sec:techniques}
In the proof of Theorem~\ref{thm:product2} the key idea is that the partitioning problem can be decomposed into smaller instances of simpler partitioning problems (Lemma~\ref{lem:decomposition}).
These smaller problems admit rank arguments for their lower bounds and are thus easy.
Our decomposition shows that on average each of these smaller problems requires a large partition via a rank argument, giving us a good overall bound.
While the rank lower bounds are fairly standard, the decomposition idea seems to be new.

The high level idea of the proof of Theorem~\ref{thm:general} is the following: We classify parts into two types, large and small (defined depending on the dimensions of its factors, later called ``non-dominated'' and ``dominated'' parts), where small parts contain at most about $q^{\dm^2/2}$ tuples each, while the total number of tuples is about $q^{\dm^2}$. On the other hand, each large part contains at least roughly a $1/q$ fraction of dependent tuples (meaning that their span has dimension less than $\dm-1$, Lemma \ref{lem:almostFlat}); while the set to be partitioned, $(\ps{q}{\dm})^\dm$, contains only about a $1/q^2$ fraction of dependent tuples, which implies that large parts can only cover about a $1/q$ fraction of all tuples. The rest must be covered by small parts, which by the previous discussion needs about $q^{\dm^2/2}$ parts (proof of Theorem \ref{thm:general}).  We remark that this high level idea has the flavor of the so-called corruption bound in
communication complexity (see \cite{DBLP:journals/cc/BeamePSW06}) and its
subsequent generalizations (e.g. \cite{ChakrabartiR10, jainklauck}).
Most of the work in our proof is in the lower bound for the fraction of dependent tuples in large parts (Lemma \ref{lem:almostFlat}), which is done by first partitioning any such part into parts having only 1-dimensional factors, and then handling this case by induction (Lemma \ref{lem:lines}) with the aid of a Sylvester-Gallai type property (Lemma \ref{lem:sylvester}).


\subsection{Organization}
The rest of the paper is organized as follows. Sec.~\ref{sec:prelims} contains relevant definitions.
Sec.~\ref{sec:2D} shows an optimal lower bound (up to constant factors) for DPP when $k=n=2$;
in Sec.~\ref{sec:construction} we present a non-trivial partition construction with structured parts;
the next section shows that this construction is essentially optimal for structured partitions.
Appendix~\ref{sec:reduction} gives more
details about how a solution to CPP would lead to a lower bound for sampling from convex bodies.

\section{Preliminaries}\label{sec:prelims}
For $n \in \NN$, let $[n] := \{1, \dotsc, n\}$.

We will work with projective spaces over finite fields. Projective spaces over finite fields are basic and extensively studied objects; see, e.g., \cite{BabaiFrankl} for an introduction.
Here we define projective spaces and note their relevant properties. In this paragraph, we follow the exposition of \cite{BabaiFrankl} closely.
Consider the $(n+1)$-dimensional linear space $W := \FF_q^{n+1}$ (where $\FF_q$ is the finite field of cardinality $q$ and $q$ is a prime power), and set $W^{\times}:= W \setminus \{0\}$.
Points in the $n$-dimensional projective space $\pnq$ over $\FF_q$ correspond to lines in $W$ through the origin.
More precisely, for $p \in W^{\times}$, consider the sets $\{a p \pip a \in \FF_q\setminus\{0\}\}$. Clearly, two such distinct sets are disjoint. These sets
together give a partition of $W^{\times}$. The projective space $\overline{W}$ consists of these sets as points. We define the dimension of $\overline{W}$ to be $n$ and denote this projective space by $\pnq$.
It is easy to see that $|\pnq| = (q^{n+1}-1)/(q-1)$; in particular, the cardinality of the \emph{projective plane} $\pq$ is $(q^3-1)/(q-1) = q^2 + q +1$. A \emph{flat} or \emph{subspace} of $\overline{W}$ is a set of the form $\overline{U}$ for a subspace $U$ of $W$. The dimension of $\overline{U}$ is defined to be $\dimension(U) -1$; thus $\dim(\emptyset)=-1$.
\details{The empty set is a flat of dimension $-1$.} 
We will often use the term $k$-flat for a $k$-dimensional flat.
For $S \subseteq \pnq$, denote by $\closure(S)$ the intersection of all flats containing $S$. 
\details{The interesection of two flats is a flat, so $\closure(S)$ is a flat.}
For a tuple $(p_1, \ldots, p_k)$ of $k$ points in $\pnq$, clearly $\dimension \closure \{p_1, \ldots, p_k\} \leq k-1$.
\details{To see this: Consider the corresponding lines in $\FF_q^{n+1}$. The maximum number of linearly independent vectors in their union is $k$, as one can pick at most one point from each line.} 
We say that  $(p_1, \ldots, p_k)$ is \emph{dependent} if $\dimension \closure \{p_1, \ldots, p_k\} < k-1$. 
Clearly, if a sub-tuple of a tuple is dependent then the whole tuple is dependent. 
A projective space of dimension 2 is called a projective plane, and flats of
dimension 1 are called (projective) lines. Projective planes have nice combinatorial properties; e.g., each point lies in
exactly $q+1$ lines, each line contains $q+1$ points, every pair of points lies on a unique line, and every pair of lines
intersects in a unique point. Higher dimensional spaces also have similar regularity properties.

\begin{definition}\label{def:almostflat}
We say that a subset of $\ps{q}{\dm}$ is an \emph{almost-flat} if it is either a flat or a $k$-flat minus a
flat of dimension at most $k-1$.
Let the \emph{dimension} of an almost-flat be the dimension of the minimal flat containing it. In particular, an almost-line
is a line or a line minus a point.
\end{definition}

We will need an appropriate counterpart for our setting (projective spaces over finite fields) of the familiar notion of 
orthogonal projection in projective spaces over the reals. This requires care because the notion of 
orthogonality can behave very differently over finite fields: In particular, a point can be orthogonal to itself. 

We define the projection using \emph{quotient by a flat}.  We will only use elementary properties of quotients and our discussion here is mostly 
self-contained. See, e.g., \cite{Faure} for a detailed treatment of quotients.
Let $F$ and $S$ be two flats in $\FF_q P^n$. An equivalence relation on $F\setminus S$ (an almost-flat) is given by $p \sim q$ iff
$\linspan((F \cap S) \cup \{p\}) = \linspan((F \cap S) \cup \{q\})$. The equivalence classes of $\sim$ are of the form
$\linspan((F \cap S) \cup \{p\})\setminus S =: [p]$ for $p \in F\setminus S$. 
The set of equivalence classes of $F\setminus S$ given by $\sim$ is called the quotient set and is denoted $F/S$.
Note that in our definition we did not require that $S \subseteq F$. 
Quotient set $F/S$ inherits the projective structure from $F$ in the natural way: For $p, q \in F\setminus S$ with $[p] \neq [q]$, 
the points are given by $[p]$, the lines are given by 
\[ \{[r] \suchthat r \in \linspan((F \cap S) \cup \{p\} \cup \{q\})\setminus S\}, \] 
and so on. 
Thus $F/S$ is a projective space of dimension $\dim(F)-\dim(F \cap S)-1$ living in $\FF_q P^{n-\dim(S)-1} = \FF_q P^n / S$. 
Notice that when $F \cap S = \emptyset$, then $\dim(F/S)=\dim(F)$, as $\dim(\emptyset)=-1$ according to our convention. 

For a flat $F' \subseteq F$, define $F'|_{F/S} := \{x \in F\setminus S \suchthat [x] \in [F']\}$, where $[F'] := \{[x] \suchthat x \in F'\setminus S\}$. 
In words, $F'|_{F/S}$ is the union of equivalence classes in $F\setminus S$ that intersect $F'$. 



We will use the following easy facts which we state without proof. 

Invariance of dependence under quotient:
\begin{claim} \label{claim:invariance-dependence}
Consider a tuple $t=(p_1, \dotsc, p_k)$,
$p_i \in \ps{q}{\dm}$, $p_1 \notin \{p_2, \dotsc, p_k\}$ and let $[p_2], \ldots, [p_k]$ be the images of $p_2, \ldots, p_k$ in the quotient
of the space by $p_1$. Then $t$ is dependent iff $([p_2], \dotsc, [p_k])$ is dependent.
\end{claim}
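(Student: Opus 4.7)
The plan is to lift the statement to the underlying vector space $W = \FF_q^{n+1}$ and reduce the claim to a routine dimension computation under a linear quotient. Choose non-zero vectors $v_1, \dotsc, v_k \in W$ so that $p_i$ is the projective point corresponding to the line $\langle v_i \rangle$. By definition, the projective span of $\{p_1, \dotsc, p_k\}$ corresponds to the linear span $\langle v_1, \dotsc, v_k \rangle$, and its projective dimension is one less than its linear dimension. Hence $t$ is dependent iff $\dim \langle v_1, \dotsc, v_k \rangle < k$.

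Next I would translate the projective quotient by $p_1$ into the linear quotient $\overline{W} := W / \langle v_1 \rangle$. Unpacking the definition in the excerpt with $F = \ps{q}{\dm}$ and $S = \{p_1\}$, a class $[p_i]$ is $\linspan\{p_1, p_i\} \setminus \{p_1\}$, which in linear terms is exactly the line in $\overline{W}$ spanned by the image $\bar v_i$ of $v_i$ under the canonical projection $W \twoheadrightarrow \overline{W}$. The hypothesis $p_1 \notin \{p_2, \dotsc, p_k\}$ guarantees $\bar v_i \neq 0$ for all $i \geq 2$, so the classes $[p_i]$ are well defined as points of $\ps{q}{\dm}/p_1$. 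Again by the correspondence between projective and linear dimensions, $([p_2], \dotsc, [p_k])$ is dependent iff $\dim \langle \bar v_2, \dotsc, \bar v_k \rangle < k - 1$ in $\overline{W}$.

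The final step is the observation that
\[
\dim \langle \bar v_2, \dotsc, \bar v_k \rangle \;=\; \dim \langle v_1, v_2, \dotsc, v_k \rangle \;-\; 1.
\]
This is immediate from the fact that $\langle v_1 \rangle \subseteq \langle v_1, \dotsc, v_k \rangle$, so the restriction of the quotient map to $\langle v_1, \dotsc, v_k \rangle$ is surjective onto $\langle \bar v_2, \dotsc, \bar v_k \rangle$ with $1$-dimensional kernel $\langle v_1 \rangle$. Combining this identity with the two equivalences above gives that $([p_2], \dotsc, [p_k])$ is dependent iff $\dim \langle v_1, \dotsc, v_k \rangle < k$, which is exactly the condition that $t$ is dependent.

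There is no real obstacle here; the only point that needs a line of care is verifying that the projective quotient $F/S$ defined set-theoretically in the paper agrees, under the lift to $W$, with the usual linear quotient by $\langle v_1 \rangle$, so that spans and dimensions transport correctly. Once that identification is made, the claim follows from the one-line kernel computation above.
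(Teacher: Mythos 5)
Your proof is correct. The paper states this claim without proof (it is listed among the "easy facts"), so there is no argument to compare against; your write-up supplies the natural intended justification. The three ingredients are all sound: (i) a tuple of projective points is dependent iff the lifted vectors $v_1,\dotsc,v_k\in\FF_q^{n+1}$ span a subspace of linear dimension less than $k$; (ii) the paper's set-theoretic quotient of $\ps{q}{\dm}$ by the point $p_1$ is exactly the projectivization of the linear quotient $\FF_q^{n+1}/\langle v_1\rangle$, with each class $[p_i]=\linspan\{p_1,p_i\}\setminus\{p_1\}$ corresponding to the line spanned by $\bar v_i$ (and the hypothesis $p_1\notin\{p_2,\dotsc,p_k\}$ guarantees $\bar v_i\neq 0$, as you note); and (iii) since $v_1\in\langle v_1,\dotsc,v_k\rangle$, restricting the quotient map gives $\dim\langle\bar v_2,\dotsc,\bar v_k\rangle=\dim\langle v_1,\dotsc,v_k\rangle-1$, which makes the two dependence conditions equivalent.
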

\details{Proof: ?}

Intersection of sub-flats with equivalence classes behaves nicely: 
\begin{claim} \label{claim:intersection-size}
For all equivalence classes $C \in F/S$ with non-empty intersection with a given flat $F'$, the intersection size $|C \cap F'|$ is the same. 
\end{claim}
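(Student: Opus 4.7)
The plan is to reduce the claim to a rank-nullity computation in the underlying vector space. First, since every equivalence class $C$ is contained in $F \setminus S$, the intersection $C \cap F'$ coincides with $C \cap (F \cap F')$, so without loss of generality I may assume $F' \subseteq F$.

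Next I would lift to the linear setting: let $\tilde F, \tilde F', \tilde S$ be the linear subspaces of $\FF_q^{n+1}$ corresponding to $F, F', S$, and set $\tilde T := \tilde F \cap \tilde S$ and $\tilde T' := \tilde F' \cap \tilde S$ (so $\tilde T' = \tilde F' \cap \tilde T$ because $\tilde F' \subseteq \tilde F$). Writing $\pi \colon \tilde F \to \tilde F / \tilde T$ for the quotient map, points of $F \setminus S$ correspond to $1$-dimensional subspaces of $\tilde F$ not contained in $\tilde T$, and the relation $p \sim q$ says exactly that the two corresponding $1$-dimensional subspaces have the same image under $\pi$. Consequently, equivalence classes of $F/S$ are in bijection with $1$-dimensional subspaces $L \subseteq \tilde F / \tilde T$, and $|C \cap F'|$ equals the number of $1$-dimensional subspaces of $\tilde F'$ that map to $L$ under $\pi$ and are not contained in $\tilde T'$.

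The main step is then a single application of rank-nullity to $\pi|_{\tilde F'}$, whose kernel is $\tilde T'$ and whose image is $(\tilde F' + \tilde T)/\tilde T$. Assuming $C \cap F' \neq \emptyset$, the line $L$ lies in this image; thus $\pi|_{\tilde F'}^{-1}(L)$ is a linear subspace of $\tilde F'$ of dimension $\dim L + \dim \tilde T' = 1 + \dim \tilde T'$, a quantity that does not depend on $L$. Counting $1$-dimensional subspaces of this $(1+\dim \tilde T')$-dimensional preimage that are not contained in the $(\dim \tilde T')$-dimensional subspace $\tilde T'$ yields the uniform value $q^{\dim \tilde T'}$, which is the desired $|C \cap F'|$ and does not depend on $C$.

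I do not foresee any real obstacle: once the reduction to $F' \subseteq F$ is made and the dictionary between equivalence classes in $F/S$ and $1$-dimensional subspaces of $\tilde F / \tilde T$ is set up, uniformity across nonempty fibers follows immediately from rank-nullity, and the line count in a prescribed-dimension ambient minus a prescribed-dimension ``hole'' is a standard projective-space computation. The only mild subtlety is keeping track of the fact that the quotient is defined on $F \setminus S$ rather than all of $F$, which is precisely what makes the count $q^{\dim \tilde T'}$ rather than $(q^{1 + \dim \tilde T'} - 1)/(q - 1)$.
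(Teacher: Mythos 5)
Your argument is correct. Note that the paper itself states Claim~\ref{claim:intersection-size} without proof (it is listed among the ``easy facts''), so there is no authorial proof to compare against; your write-up is exactly the kind of verification that was left implicit. The reduction to $F' \subseteq F$ is harmless (and in the paper's only application, in the proof of Lemma~\ref{lem:almostFlat}, one already has $F' \subseteq F$), the dictionary between equivalence classes of $F/S$ and $1$-dimensional subspaces of $\tilde F/\tilde T$ is set up correctly, and the rank--nullity step is sound: the preimage of $L$ under $\pi|_{\tilde F'}$ has dimension $1+\dim \tilde T'$ and contains $\tilde T'$, so the count of lines in it not lying in $\tilde T'$ is $\bigl(q^{1+\dim\tilde T'}-q^{\dim\tilde T'}\bigr)/(q-1)=q^{\dim \tilde T'}$, independent of $C$. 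A small bonus of your route is that it gives the common intersection size explicitly, $q^{\dim(\tilde F'\cap\tilde S\cap\tilde F)}$ (linear dimension), rather than merely its uniformity.
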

\details{Proof: ?}

Dependence is a property of the equivalence classes: 
\begin{claim} \label{claim:dependence-prop-eq} 
Let $t = (p_1, \ldots, p_k, q_{k+1} , \ldots , q_{j} , \ldots q_m$), where the $p_i$'s and the $q_j$'s are 
points in $\pnq$. Let $t'$ be obtained from $t$ by replacing $q_j$ by $q'_j$. Also assume that $q_j, q'_j$ are in the same equivalence
class in the quotient of $\pnq$ by $S= \linspan(p_1, \ldots, p_k)$, i.e. $\linspan(S \cup \{q_i\}) = \linspan(S \cup \{q'_i\})$. Then 
either both $t$ and $t'$ are dependent or both are independent. 
\end{claim}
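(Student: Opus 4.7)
\textbf{Proof plan for Claim \ref{claim:dependence-prop-eq}.}

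The plan is to reduce the claim to the observation that dependence of a tuple depends only on the dimension of its span, and then verify that the spans of $t$ and $t'$ coincide. By the definition of dependence, $t$ is dependent iff $\dim \linspan\{p_1, \ldots, p_k, q_{k+1}, \ldots, q_m\} < m - 1$, and analogously for $t'$. Hence it suffices to prove that $\linspan(t) = \linspan(t')$, where I abuse notation and write $\linspan(t)$ for the span of the points appearing in $t$.

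The key step is a two-way inclusion using the hypothesis $\linspan(S \cup \{q_j\}) = \linspan(S \cup \{q'_j\})$. Since $S = \linspan(p_1, \ldots, p_k) \subseteq \linspan(t)$ and $q_j \in \linspan(t)$, we have $\linspan(S \cup \{q_j\}) \subseteq \linspan(t)$, so $q'_j \in \linspan(t)$. Every other entry of $t'$ is already an entry of $t$, so $\linspan(t') \subseteq \linspan(t)$. The reverse inclusion is symmetric: $S \subseteq \linspan(t')$ and $q'_j \in \linspan(t')$ give $q_j \in \linspan(S \cup \{q'_j\}) \subseteq \linspan(t')$, hence $\linspan(t) \subseteq \linspan(t')$.

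Combining the two inclusions yields $\linspan(t) = \linspan(t')$, so the two spans have the same dimension, and the dependence criterion $\dim \linspan < m - 1$ holds for $t$ iff it holds for $t'$. This concludes the proof.

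I do not anticipate any real obstacle: the argument is essentially a one-line verification once one recognizes that the hypothesis about equivalence classes is exactly the statement that swapping $q_j$ for $q'_j$ leaves the total span invariant. The only care needed is to keep track of the set-theoretic identifications between the projective span operator on finite subsets of $\pnq$ and on tuples (repeated entries are harmless, and $S$ already lies inside any span containing $p_1, \ldots, p_k$).
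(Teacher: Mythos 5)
Your argument is correct: the paper states Claim~\ref{claim:dependence-prop-eq} as an easy fact without proof, and your verification that $\linspan(S\cup\{q_j\})=\linspan(S\cup\{q'_j\})$ forces $\linspan(t)=\linspan(t')$ (by the two monotonicity inclusions), combined with the fact that dependence of an $m$-tuple only reads off whether the dimension of the span of its entries is less than $m-1$, is exactly the intended justification. No gaps; the handling of repeated entries and of $S\subseteq\linspan(t)$ is the only point of care, and you address it.
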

\details{Proof: ?}

\section{The discrete partitioning problem for $n=2$}\label{sec:2D}
In this section, instead of the projective space $\pnq$, we restrict ourselves to the
projective plane $\pq$.
Let us restate the problem for the projective plane. We want a partition of $(\pq)^2$ of the form
\begin{align} \label{def:partition2}
(\pq)^2 = \bigdotcup_{i=1}^{N} A^i_1 \times A^i_2,
\end{align}
such that for for all $i$ we have $A^i_1 \times A^i_2 \subseteq (L^i)^2$, where $L^i$
is a line in $\pq$.

We have $|(\pq)^2|=(q^2+q+1)^2 \approx q^4$. The upper and lower bounds
we discussed in Section~\ref{sec:intro} for the general problem now become $O(q^4)$ and $\Omega(q^2)$.
However, it turns out that $N=\Theta(q^3)$.

\paragraph{The upper bound.} First, note that for any point $p \in \pq$ there are
$q+1$ lines $L^p_1, L^p_2, \ldots, L^p_{q+1}$ through $p$. These lines only intersect in $p$ and together
they cover all of $\pq$.  Thus $L^p_1$ and $L^p_2 \setminus \{p\}, L^p_3 \setminus \{p\}, \ldots,
L^p_{q+1} \setminus \{p\}$ partition $\pq$. Now we can state our $O(q^3)$ size partition of $(\pq)^2$.
Each part is of the form $p \times L^p_1$ or $p \times (L^p_i \setminus \{p\})$ for
$i \in \{2, \ldots q+1\}$ and $p \in \pq$. Clearly these parts are mutually disjoint: For any two parts,
either the
first factors are different and disjoint, or if they are the same, then the second factors are disjoint by our
construction of the partition of $\pq$.  It is also clear that we cover all of $(\pq)^2$ in this way.
The size of this partition is $(q^2+q+1)(q+1) = O(q^3)$.

We now show that the above upper bound is the best possible up to a constant:
$N=\Omega(q^3)$.

\begin{ntheorem}[\ref{thm:product2} restated]
In the discrete partitioning problem for $k=2=n$ the partition size satisfies $N = \Theta(q^3)$.
\end{ntheorem}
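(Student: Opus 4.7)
The plan is to establish the $N=\Omega(q^3)$ lower bound by decomposing the partition line-by-line and then applying a rank argument to each piece. First, to each part $A^i_1 \times A^i_2$ I would assign a line $L^i \subseteq \pq$ witnessing condition (2) of DPP. Observe that if the part contains any pair $(p,q)$ with $p \neq q$, then $L^i$ is forced to be the unique projective line through $p$ and $q$; for singleton parts $\{(p,p)\}$ the assignment is free among the $q+1$ lines through $p$. For each line $L$ of $\pq$, let $S_L := \{i : L^i = L\}$ and $T_L := \bigsqcup_{i \in S_L} A^i_1 \times A^i_2 \subseteq L^2$. By the uniqueness-of-line observation, every off-diagonal pair of $L^2$ lies in $T_L$, so if we set $D_L := \{p \in L : (p,p) \in T_L\}$, then $T_L = L^2 \setminus \{(p,p) : p \in L \setminus D_L\}$. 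Since the diagonal of $(\pq)^2$ is partitioned among the $T_L$'s,
\[
\sum_{L} |D_L| = |\pq| = q^2+q+1.
\]

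Next I would lower-bound $|S_L|$ via rank. View the indicator of $T_L$ as a matrix $M_L \in \{0,1\}^{L \times L}$; then $M_L = J - \mathrm{diag}(\mathbf{1}_{L \setminus D_L})$, where $J$ is the all-ones matrix. Since the parts in $S_L$ are disjoint combinatorial rectangles, their indicator matrices are rank-one outer products summing to $M_L$ over $\mathbb{R}$, giving $|S_L| \geq \mathrm{rank}_{\mathbb{R}}(M_L)$. A short case analysis on $|D_L|$ shows $\mathrm{rank}(M_L) \geq q+1 - |D_L|$: the rows indexed by $L \setminus D_L$ are the distinct vectors $\mathbf{1} - e_p$, which together span a space of dimension $|L \setminus D_L| = q+1-|D_L|$ (and including a row $\mathbf{1}$ adds one more dimension when $D_L \neq \emptyset$).

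Summing over the $q^2+q+1$ lines and using the diagonal count,
\[
N = \sum_{L} |S_L| \geq \sum_L (q+1-|D_L|) = (q^2+q+1)(q+1) - (q^2+q+1) = q(q^2+q+1) = \Omega(q^3),
\]
which matches the upper bound already exhibited.

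The one genuine obstacle is the rank lower bound on $M_L$, and even that is essentially a computation once the degenerate boundary cases $D_L = L$ (where $M_L = J$ has rank $1$) and $D_L = \emptyset$ (where $M_L = J - I$ has full rank $q+1$) are checked alongside the generic case. The real content is that DPP's condition (2) forces the line assignment to be essentially canonical on off-diagonal pairs, leaving the sizes $|D_L|$ as the only genuine degrees of freedom; these are globally constrained by $\sum_L |D_L| = q^2+q+1$, and this constraint is what balances the per-line rank bound to produce the tight $\Theta(q^3)$ answer.
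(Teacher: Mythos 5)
Your argument is correct, and it reaches the same $N \geq q(q^2+q+1)$ threshold as the paper, but by a noticeably leaner route. The paper first replaces the partition by a \emph{canonical} one (each part either has equal factors or disjoint factors, at a factor-$4$ blow-up), bounds square parts by $O(q^2)$, proves a decomposition lemma ($\sum_L \phi(L) \geq q(q^2+q+1)$, where $\phi(L)$ counts the traces of square parts on the line $L$), and then applies a Graham--Pollak rank argument to a representative complete-bipartite-minus-perfect-matching subgraph of each $G(L)$. You skip canonicalization entirely: assigning each part its witness line, observing that any off-diagonal pair $(p,q)$ forces that line uniquely (so every off-diagonal pair of $L^2$ lands in $T_L$), and replacing the paper's decomposition lemma by the exact diagonal budget $\sum_L |D_L| = q^2+q+1$ gives the same balancing effect with less bookkeeping and no factor-$4$ loss; your per-line step is still a rank bound, now applied directly to $J$ minus a partial diagonal (a mild generalization of the Graham--Pollak matrix $J-I$), and your verification of $\operatorname{rank}(M_L) \geq q+1-|D_L|$ is sound in all cases including $D_L=\emptyset$ and $D_L=L$. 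Like the paper's proof, yours uses only the combinatorial axioms of a projective plane, so it too extends beyond the algebraic construction. Two small caveats: the claim $N=\Theta(q^3)$ also needs the $O(q^3)$ construction (point $\times$ pencil of almost-lines through that point), which you invoke as ``already exhibited'' --- that is legitimate since the paper proves the upper bound separately before the theorem, but it is not part of your argument; and you should state explicitly that each part is assigned to exactly one line, so that the sets $S_L$ partition the index set and $N=\sum_L |S_L|$, which is the step your final summation relies on.
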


\begin{proof} The key idea of the proof is that the partitioning problem can be decomposed into
smaller instances of simpler partitioning problems (Lemma~\ref{lem:decomposition} below). These smaller problems
admit rank arguments (similar to the one used in some proofs of a theorem by Graham and Pollak~\cite{GrahamP71}) for
their lower bounds. Our decomposition shows that on average each of these smaller
problems requires a large partition, giving us a good overall bound.

It will be useful to work without loss of generality with what we will call \emph{canonical} partitions, as it is easier to prove a lower bound for this restricted kind of partition.
We say that a partition of $(\pq)^2$ as in \eqref{def:partition2} is
canonical if each of its parts is canonical. We say that a part
$A_1 \times A_2$ is
canonical if either $A_1=A_2$ (\emph{square parts}) or $A_1 \cap A_2 = \emptyset$
(\emph{non-square parts}). In other words,
either the two factors are equal, or they are disjoint.

Given any partition $\{A^i_1 \times A^i_2\}$, we can construct a canonical
partition with at most $4$ times more number of parts as follows. For each part,
decompose it into four canonical parts:
\begin{align*}
A^i_1 \times A^i_2 &=
[(A^i_1 \cap A^i_2) \times (A^i_1 \cap A^i_2)] 
\dot\cup
[(A^i_1 \setminus A^i_2) \times (A^i_1 \cap A^i_2)] \\
&\qquad\dot\cup
[(A^i_1 \cap A^i_2) \times (A^i_2 \setminus A^i_1)] 
\dot\cup
[(A^i_1 \setminus A^i_2) \times (A^i_2 \setminus A^i_1)].
\end{align*}

Henceforth we assume that our partitions are canonical.

It will be helpful to think of
$(\pq)^2$ as a complete bipartite graph, with one copy of $\pq$ in the product representing
one side of vertices and the other copy representing the other side. Edges in this graph are then
the elements of $(\pq)^2$. Each canonical part can be thought of as an induced complete
bipartite subgraph.

Clearly, the number of square parts in any canonical partition is at most $q^2+q+1 = O(q^2)$.
We will show that the number of non-square parts is $\Omega(q^3)$.

Notice that if $\{S^i \times S^i \pip i \in [N]\}$ is the set of square parts, then
$\{S^i\}$ form a partition of $\pq$. Thus, $\{S^i \pip i \in [N]\}$ also induce a partition
of each line $L$; let $\phi(L)$ be the number of parts in such a partition of
$L$. Clearly $\phi(L) \leq q+1$.  The following lemma shows that on average
$\phi(L)$ is almost as large as $q+1$.

\begin{lemma} \label{lem:decomposition}
$\sum_{L} \phi(L) \geq q (q^2+q+1)$, where the summation is over all lines.
\end{lemma}
\begin{proof}
For any point $a$ there is some square part $S^i \times S^i$ such that
$a \in S^i$.  Now $a$ lies in $q+1$ lines, say, $L_1, \ldots, L_{q+1}$.  Since
our requirement on the partition is that $S^i$ should be completely in some
line, we have that for all but at most $1$ of the $q+1$ lines $L \in \{L_1,
\ldots, L_{q+1}\}$ we have $|L\cap S^i|=1$.  Thus $a$ appears as a singleton
in the partitions (induced by the square parts) for at least $q$ lines.  So each
of the $q^2+q+1$ points contributes at least $q$ to the sum, which gives the bound in the lemma.
\end{proof}

Remove the edges covered by square parts, then we are left with a bipartite graph whose edge set is partitioned by non-square parts.
In this graph, each line $L$ induces a bipartite subgraph $G(L)$ defined as follows: 
$G(L)$ is the bipartite subgraph induced by a copy of $L$ in the left vertices and a copy of $L$ in the right vertices. 
In other words, the edges of $G(L)$ are the edges in $L \times L$ not covered by square parts.
This implies that the edge set of each $G(L)$ is covered by non-square parts. 
Also, the edge sets of graphs $\{G(L)\}_L$ are disjoint by our construction. 
But a stronger property holds:
Each non-square part completely lies in one of the $G(L)$s.
More precisely, if $R^i_1 \times R^i_2$ is a non-square part such that
$R^i_1 \subseteq L$ and $R^i_2 \subseteq L$ for some line $L$, then
$(R^i_1 \times R^i_2) \cap (L'\times L') = \emptyset$ for all lines $L' \neq L$.

We know that
$G(L)$ looks like this: Let $L = S^1 \dot\cup \dotsb \dot\cup S^{\phi(L)}$ be the
partition of $L$ induced by square parts as above.  Then $G(L)$ has all the
edges in sets $S^i \times S^j$ for $i, j \in [\phi(L)], i \neq j$.  Now an easy
adaptation of the matrix proof of the Graham--Pollak theorem~\cite{GrahamP71} (see
Lemma~\ref{lem:GP} below)
gives that $G(L)$ needs $\phi(L)$ non-square parts. To see this,
choose one point $p^i$ from each $S^i$, and consider the subgraph of $G(L)$ induced by
the vertices in both color classes of $G(L)$ corresponding to points $\{p^1, \ldots, p^{\phi(L)}\}$.
Applying Lemma~\ref{lem:GP}
to this subgraph gives the required bound on the number of non-square parts.
Thus the total number of parts
we need is $\sum_L \phi(L) \geq q(q^2+q+1)$ by the lemma above.
\end{proof}

\medskip
We note that the proof did not make use of the algebraic structure of the projective plane,
and it holds for combinatorial projective planes as well.

\begin{lemma} \label{lem:GP}
Let $B = ((U,V),E)$ be a bipartite graph with $|U| = |V| = n$, and $E = \{(u_i, v_j) \pip
i, j \in [n] \text{ and } i \neq j\}$. (In other words, $B$ is a complete $n \times n$
bipartite graph minus a perfect matching.) Any partition of $E$ into complete bipartite graphs requires
at least $n$ graphs.
\end{lemma}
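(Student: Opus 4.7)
\begin{proofidea}
The plan is to use a standard rank argument, adapted to the bipartite setting, in the spirit of the Graham--Pollak theorem.

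First I would encode $B$ as its biadjacency matrix $M \in \{0,1\}^{n \times n}$, where $M_{ij} = 1$ iff $(u_i, v_j) \in E$. By construction, $M = J - I$, where $J$ is the all-ones matrix and $I$ is the identity. A partition of $E$ into complete bipartite subgraphs $X_1 \times Y_1, \dotsc, X_k \times Y_k$ (with $X_i \subseteq U$, $Y_i \subseteq V$) corresponds to a decomposition
\[
M \;=\; \sum_{i=1}^{k} \mathbf{1}_{X_i}\,\mathbf{1}_{Y_i}^{T},
\]
where $\mathbf{1}_{X_i}, \mathbf{1}_{Y_i} \in \{0,1\}^n$ are the indicator vectors. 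Each summand is a rank-one matrix, so $\rank(M) \leq k$.

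Next I would compute $\rank(J - I)$ over $\RR$. The eigenvalues of $J$ are $n$ (simple) and $0$ (with multiplicity $n-1$), so the eigenvalues of $J - I$ are $n-1$ and $-1$; none of them vanish (here $n \geq 1$ is harmless, and the interesting case is $n \geq 2$), so $\rank(J - I) = n$. Combining this with the previous display gives $k \geq n$, as required.

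The argument is short and I do not anticipate a serious obstacle; the only point that deserves care is that we are handling a partition (so the decomposition of $M$ as a sum of rank-one $\{0,1\}$-matrices is an exact equality over $\ZZ$, not merely a covering identity mod something), which is exactly what makes the rank bound go through without any signed-sum trick of the type required in the original Graham--Pollak proof for non-bipartite $K_n$.
\end{proofidea}
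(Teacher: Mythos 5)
Your proof is correct and follows essentially the same route as the paper's: write the biadjacency matrix as $J-I$, note that a partition into complete bipartite subgraphs expresses it as a sum of rank-one matrices, and conclude $k \geq n$ from $\rank(J-I)=n$ over $\RR$ by subadditivity of rank. The explicit eigenvalue computation is just a more detailed justification of the rank fact the paper states directly, and your remark about why a partition (rather than a cover) is needed is exactly the right point of care.
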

\begin{proof}
Consider the bipartite adjacency matrix $A(B)$ of $B$ (rows indexed by $U$ and columns by $V$,
and  $A(B)_{(u,v)} = 1$ if $(u,v) \in E$ else $A(B)_{(u,v)} = 0$). Let $B_1, \ldots, B_r$ be
complete bipartite subgraphs whose edges sets partition $E$. Then we can write
\begin{align}
A(B) =  \sum_{i \in [r]} A(B_i).
\end{align}
The algebra in the rest of the proof is over \RR. 
Now, notice that $\rank\:A(B) = n$ (this is because $A(B) = J-I$, where $J$ is the all ones matrix and $I$ is the identity matrix, after a suitable reordering of the vertices), but $\rank\:A(B_i)=1$ for $i \in [r]$. 
The subadditivity of rank implies that $r \geq n$.
\end{proof}


We remark that there are generalizations of the Graham--Pollak theorem for hypergraphs
\cite{Alon86, CioabaKV09} and it is natural to try to use these to solve the partitioning problem for
higher $k$. However, we have not succeeded in this.

\section{A small size partition} \label{sec:construction}
We construct a partition of $(\pnq)^n$ with size $O(q^{\binom{n+1}{2}})$, by
generalizing our partition construction for the product of two projective planes (Sec.~\ref{sec:2D}). More generally,
the same ideas give a partition of $(\pnq)^k$ with size $O(q^{\binom{k+1}{2}})$ (independent of $n$).
Informally, for the product of two projective planes
the parts were of type (point $\times$ almost-line). For $(\pnq)^n$, parts are of type (point
$\times$ almost-line $\times$ almost-2-flat $\times \dotsb \times$ almost-$(n-1)$-flat), where an
almost-$r$-flat is either an $r$-flat or an $r$-flat minus an $(r-1)$-subflat.
We now describe our construction in detail.

\medskip

\begin{proof}[Proof (of Theorem \ref{thm:upper})]
Let $1 \leq r < n$. For an $(r-1)$-flat $F$ consider $r$-flats $F_1, F_2, \ldots$ containing $F$.
There are $(q^{n+1}-q^r)/(q^{k+1}-q^r)$ 
such flats and any two of them intersect precisely in $F$. This provides a
partition of $\pnq$ into almost-$r$-flats with size $(q^{n+1}-q^r)(q^{k+1}-q^r)$ 
: The first part is $F_1$ and other
parts are $F_2 \setminus F, F_3 \setminus F, \ldots$ We call this partition a \emph{partition around $F$}.

Now to construct a partition of $(\pnq)^n$, it will be convenient to index the $n$ copies of $\pnq$ as
$P_1, \ldots, P_n$.  So we are considering a partition of $P_1 \times P_2 \times \dotsb \times P_n$.
We start by partitioning $P_1$. Let $\mathcal{P}_1$ be the partition of $P_1$ into
singletons. 
For each
$S_1 \in \mathcal{P}_1$, consider a partition of $P_2$ around $\closure(S_1) = S_1$. Denote this by
$\mathcal{P}_2(S_1)$. For $S_2 \in \mathcal{P}_2(S_1)$ consider partition of $P_3$ around
$\closure(S_2)$, and so on.

Our partition of $(\pnq)^n$ is then made up of all the parts of the form $S_1 \times \dotsb \times S_n$.
The number of choices for the first factor is $|\mathcal{P}_1| = (q^{n+1}-1)/(q-1)$. 
Having fixed the first factor $S_1$, the number of choices for the second
factor are $|\mathcal{P}_2(S_1)| = (q^{n+1}-q)/(q^2-q)$. 
And so on. So the total number of choices is
\begin{align*}
\frac{q^{n+1}-1}{q-1} \cdot \frac{q^{n+1}-q}{q^2-q} \cdot \frac{q^{n+1}-q^2}{q^3-q^2} \dotsm \frac{q^{n+1}-q^{n-1}}{q^{n}-q^{n-1}} 
&= \frac{q^{n+1}-1}{q-1} \cdot \frac{q^{n}-1}{q-1} \cdot \frac{q^{n-1}-1}{q-1} \dotsm \frac{q^{2}-1}{q-1} \\
&\leq \frac{q^{\binom{n+1}{2}}}{\bigl(1-\frac{1}{q}\bigr)^n} \\
&\leq \frac{q^{\binom{n+1}{2}}}{1-\frac{n}{q}}.
\end{align*}
For $q \geq 2n$ we have $1/(1-n/q) \leq 1+2n/q$. The claim follows.
\end{proof}

\section{The structure of decision trees for DSPAN}\label{sec:structure}
In this section we prove the claim from the introduction on the structure of the partition induced by a decision tree for the DSPAN problem: Each part is a product set, where each factor is a flat minus a few flats.
\begin{lemma}\label{lem:flatminusflats}
Consider a deterministic decision tree for DSPAN making at most $h$ queries\details{that is, of height $h$}. Let $A$ be a part of the partition of $(\pnq)^n$ induced by the leafs of the tree. Then
\begin{enumerate}
\item\label{item:thin}
There is an $(n-1)$-flat $F \subset \pnq$ with $A \subseteq F^n$.
\item\label{item:flatminusflats}
We can write $A = A_1 \times A_2 \times \dotsb \times A_n$ where each $A_i$ is of the form 
$G \setminus (G_1 \cup G_2 \cup \dotsb \cup G_h)$, where $G, G_1, \dotsc, G_h$ are flats.
\end{enumerate}
\end{lemma}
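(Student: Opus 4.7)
My plan is to trace along a root-to-leaf path in the decision tree and bookkeep, query by query, the constraints that each answer imposes on each coordinate $v_j$ of the input. Fix a leaf $\ell$ and let $x^{(1)}, \dotsc, x^{(m)}$ (with $m \leq h$) be the $(n-1)$-flats queried along the path from the root to $\ell$, together with the answer returned at each step. The part $A$ is by definition the set of inputs $(v_1, \dotsc, v_n) \in (\pnq)^n$ consistent with all these answers.

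For each query $x^{(t)}$ along the path I would read off what the corresponding answer tells us about each coordinate. If the answer is YES, then $v_j \in x^{(t)}$ for every $j$. If the answer is ``least index $i$'', then $v_j \in x^{(t)}$ for all $j<i$, $v_i \notin x^{(t)}$, and there is no constraint on $v_j$ for $j > i$. The crucial observation is that every one of these constraints involves only one coordinate, so the set $A$ is automatically a product $A_1 \times \dotsb \times A_n$. Moreover, the constraints on $A_j$ split into \emph{positive} constraints $v_j \in x^{(t)}$ (from YES answers and from ``least index $i$'' answers with $i>j$) and \emph{negative} constraints $v_j \notin x^{(t)}$ (arising only from ``least index $j$'' answers). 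Since each query produces at most one negative constraint on any single coordinate, there are at most $h$ negative constraints on $A_j$. The intersection of the positive constraints is a single flat $G$ (flats are closed under intersection), and the negative constraints subtract at most $h$ flats from $G$, padding with the empty flat (a flat of dimension $-1$ in the paper's convention) if fewer were actually imposed. This proves item~(\ref{item:flatminusflats}).

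For item~(\ref{item:thin}) I would simply invoke correctness of the decision tree. At leaf $\ell$ the algorithm outputs an $(n-1)$-flat $F$; since the tree solves DSPAN, this $F$ must contain $v_1, \dotsc, v_n$ for every input reaching $\ell$. Hence $A_j \subseteq F$ for each $j$, so $A \subseteq F^n$.

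The argument is essentially bookkeeping and I do not foresee a genuine obstacle; the non-trivial content is conceptual, namely the realization that the strengthened oracle was designed precisely so that each answer either confines or excludes a single coordinate, yielding product parts whose factors are flats minus a few flats. The only mild subtleties are (a) noting that item~(\ref{item:thin}) really requires correctness rather than following from the transcript alone (for instance, a path with no YES answer still lands on a leaf whose output flat must work), and (b) checking that the positive constraints on each coordinate genuinely collapse to a single flat rather than a more intricate set.
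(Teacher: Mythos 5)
Your proposal is correct and follows essentially the same route as the paper: the key point in both is that a YES answer intersects every coordinate with the queried $(n-1)$-flat, while a ``least index $i$'' answer intersects coordinates $j<i$, subtracts the flat from coordinate $i$, and leaves the rest untouched, with item~(1) following from correctness of the output at the leaf. The only cosmetic difference is that the paper organizes this bookkeeping as an induction on the depth of the node, whereas you unroll it along the root-to-leaf path; the content (product structure, flats closed under intersection, at most one subtracted flat per query) is identical.
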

\begin{proof}
Part \ref{item:thin} must hold because the output of the tree is correct for DSPAN.

We prove the following strengthening of \ref{item:flatminusflats}: 
That \ref{item:flatminusflats} holds for the set of tuples $A$ associated to \emph{any} node of the decision tree with $h$ equal to the depth of the node. 
By induction on $h$.
It is clearly true for $h=0$ (no queries, the root) as in this case $A = (\pnq)^n$. 
For the inductive step, let $A \subseteq (\pnq)^n$ be the part associated to a node of depth $h$. 
By the inductive hypothesis, its parent part $A'$ is of the form $A_1 \times A_2 \times \dotsb \times A_n$, where each $A_i$ is of the form $G \setminus G_1 \cup G_2 \cup \dotsb \cup G_{h-1}$ where $G, G_1, \dotsc, G_{h-1}$ are (possibly empty) flats. 
The query that restricts $A'$ to get $A$ is some $(n-1)$-flat $p \subseteq \pnq$.
If the result of the query is YES, the interpretation of the query means that the restriction is: Intersect each $A_1, \dotsc, A_{n}$ with $p$.
If the result of the query is NO and index $i \in [n]$,
the interpretation of the query means that the restriction is: Intersect each $A_1, \dotsc, A_{i-1}$ with $p$, subtract $p$ from $A_i$ and leave $A_{i+1}, \dotsc, A_n$ unchanged. The claimed structure holds in both cases.
\end{proof}

\section{Lower bound for structured partitions}

In this section we show a lower bound for the discrete partitioning problem when factors of each part are almost-flats (Theorem~\ref{thm:general}, Def.~\ref{def:almostflat}). The outline of the proof in Sec.~\ref{sec:techniques} will be useful
for reading the proof below.


\begin{definition}[projective lines in general position]\label{def:generalPosition}
We say that a set of at most $\dm+1$ projective lines in $\ps{q}{\dm}$ is in \emph{general position} if for any $k \in [\dm-1]$ no $k+1$ of them are contained in a $k$-flat.
\end{definition}

\begin{lemma}[Sylvester-Gallai type property]\label{lem:sylvester}
Let $L$ be a set of at most $\dm+1$ projective lines in $\ps{q}{\dm}$ in general
position (Definition \ref{def:generalPosition}). Then there exists a projective line $l \in L$ that intersects the other projective lines in $L$ in at most 2 points, i.e. there are (at most) two points $p, q \in l$ such that $l \cap l' \in \{p, q\}$ for all 
$l' \in L\setminus\{l\}$. 
\end{lemma}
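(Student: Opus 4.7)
The plan is to argue by contradiction: assume every $l \in L$ meets the other lines in at least 3 distinct points, and derive a contradiction with general position.

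The first step is to set up a local class structure at each line. For $l \in L$, I partition the set $N(l)$ of lines meeting $l$ into classes $C_1,\dotsc,C_{s_l}$ indexed by their intersection point with $l$. Using ``no 3 lines in a 2-flat'' together with the easy observation that three pairwise-intersecting, non-coplanar lines are concurrent, I will show that lines in the same class of $l$ all pass through that class's point on $l$, while lines in different classes of $l$ are pairwise skew.

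The technical core is a common-neighbour constraint: for any two non-meeting lines $l, l' \in L$, at most one line of $L$ meets both. I prove this by a dimension count. Suppose some $l'$ (with $l' \cap l = \emptyset$) meets two lines $l_a, l_b \in N(l)$. In the same-class case (concurrent at $p \in l$), $\linspan(l_a, l_b)$ is a 2-flat; since $l'$ avoids $p$, it meets $l_a, l_b$ at two distinct points and hence $l' \subseteq \linspan(l_a, l_b)$, producing 3 lines in a 2-flat. In the different-class case, $l_a, l_b$ are skew, so $\linspan(l_a, l_b)$ is a 3-flat that already contains $l$ (the two distinct points $l \cap l_a$ and $l \cap l_b$ span $l$); the same argument places $l'$ in this 3-flat, yielding 4 lines in a 3-flat. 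Both conclusions contradict general position.

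Third, I use this for a counting argument. For each $l$ and each $l_a \in N(l)$, the hypothesis $s_{l_a} \geq 3$ forces at least 2 meeting points on $l_a$ distinct from $l \cap l_a$; since lines in other classes of $l$ are skew to $l_a$ and lines in the same class meet $l_a$ at $l \cap l_a$, these points must come from at least 2 ``$M$-neighbours'' of $l_a$ (lines in $M := L \setminus (\{l\} \cup N(l))$ meeting $l_a$). By the common-neighbour constraint, the $M$-neighbour sets of distinct $l_a, l_{a'} \in N(l)$ are disjoint, giving $|M| \geq 2 d_l$. Combined with $|M| = |L| - 1 - d_l \leq n - d_l$, this yields $d_l \leq n/3$; since $d_l \geq s_l \geq 3$, we need $n \geq 9$, so the cases $n \leq 8$ are handled.

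The hard part will be the residual case $n \geq 9$, where the counting inequalities become tight but not contradictory (for $n=9, m=10$ the intersection graph is forced to be 3-regular triangle-free on 10 vertices, i.e., the Petersen graph). To close this I plan to iterate the common-neighbour bound — using in particular that two distinct $M$-neighbours of the same $l_a$ meeting $l_a$ at different points must themselves be non-meeting (else they form a concurrent triple with $l_a$, merging their points on $l_a$) — to extract additional non-meeting pairs, and then to combine these with a carefully ordered sequential dimension argument: begin with $l$, append one representative from each class of $l$, then append two disjoint $M$-neighbours of each representative. General position forces each added line to contribute a new dimension until the ambient is reached, and verifying that the disjointness coming from the common-neighbour constraint keeps each new $M$-neighbour independent of the prior span is the delicate point I expect to require the most care.
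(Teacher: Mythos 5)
There is a genuine gap. Your contradiction setup, the class structure at each line, the common-neighbour constraint (via the 2-flat/3-flat span count), and the resulting bound $d_l \leq n/3$ are all essentially correct for $n\geq 4$ (note a small caveat: for $n=3$ the ``four lines in a 3-flat'' conclusion is not excluded by Definition~\ref{def:generalPosition}, since $k$ only ranges over $[n-1]$, so the different-class case of your constraint needs a separate easy argument there, as do $n=2,3$ generally). But the argument stops being a proof exactly where the lemma becomes nontrivial: for $n\geq 9$ you have only a plan (``I plan to iterate\dots'', ``the delicate point I expect to require the most care''), not an argument. The step you defer --- showing that the representatives and their pairwise-disjoint $M$-neighbours, appended in some order, each force a strictly new dimension so that the total exceeds $n$ --- is precisely the mathematical content that must be supplied: a later line may meet the span of the previously accumulated lines without being contained in any small flat with few enough of them to violate general position, and nothing in your counting so far rules this out. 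So the residual case, which is the main case, is unproven.

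For comparison, the paper's proof needs none of this machinery: it inducts on $n$ and greedily builds a chain $l_1,l_2,\dotsc$ in which each new line meets at least one earlier line; general position then forces $\dimension\linspan\{l_1,\dotsc,l_i\}=i$ at every step. If the chain cannot be extended, the induction hypothesis applied to the chain built so far (which spans a lower-dimensional flat) produces the desired line, since the remaining lines miss it entirely; otherwise the last line of the chain meets $\linspan\{l_1,\dotsc,l_{k-1}\}$ in a single point (and, in the extreme case $k=n+1$, meets $\linspan\{l_1,\dotsc,l_{n-1}\}$ in one point plus possibly $l_n$ in one more), so it intersects the other lines in at most two points. That directly exhibits the good line, with no contradiction hypothesis, no counting, and no extremal-graph analysis. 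If you want to salvage your approach, the cleanest fix is to abandon the $n\geq 9$ case analysis and replace your planned ``sequential dimension argument'' by this chain construction, which is self-contained and already proves the full statement.
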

\begin{proof}
By induction on $\dm$. It is true for $\dm=1$. For general $\dm$, we will
define a sequence $l_1, l_2, \dotsc$ of lines in $L$. We will add
lines incrementally preserving the property that
$\dimension \linspan \{ l_1, \dotsc, l_i\} = i $. Start by picking any
line $l_1 \in L$. Pick a line $l_2 \in L\setminus \{l_1\}$ that intersects $l_1$
(if there is no such line then $l_1$ is the desired line).
In general, if there exists $l_i \in L \setminus \{l_1, \dotsc, l_{i-1}\}$
that intersects at least one of $l_1, \dotsc, l_{i-1}$, then we have
$\dimension \linspan \{l_1, \dotsc, l_i \} = \dimension \linspan \{l_1, \dotsc, l_{i-1} \} +1 = i$
($l_i$ cannot be contained in $\linspan \{l_1, \dotsc, l_{i-1} \}$
if $L$ is in general position). If no such $l_i$ exists, then
the inductive hypothesis applied to $\{l_1, \dotsc, l_{i-1}\}$ gives the line desired in the statement.
Suppose we pick all lines in $L$ in this way and the last line is $l_k$.
If $k < \dm + 1$, then $l_k$ intersects the others in one point. If $k=\dm + 1$, then
the fact that $L$ is in general position implies that $l_k$ intersects at most
one of $l_1, \dotsc, l_{\dm-1}$, and it can possibly intersect $l_\dm$. Thus, $l_k$
is the desired line.
\end{proof}
The previous lemma is tight in the following sense: For $\dm=2$, the case of the projective plane, any 3 lines in general position intersect pairwise.

\begin{lemma}[fraction of dependent tuples in products of almost-lines]\label{lem:lines}
Let $L= (l_i)_{i =1}^{\dm+1}$ be a family of almost-lines in $\ps{q}{\dm}$ with $q \geq 3$. Then the number of dependent tuples in $T=\prod_{i =1}^{\dm+1} l_i$ is at least $(q-2)^{\dm-1} (q-1)$.
\end{lemma}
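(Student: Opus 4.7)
The plan is to prove Lemma \ref{lem:lines} by induction on $\dm$. For the base case $\dm=1$, the space $\ps{q}{1}$ is itself a single line with $q+1$ points, so any two almost-lines $l_1, l_2$ (each having at least $q$ points) intersect in at least $q-1$ points by inclusion-exclusion; each common point $v$ gives a dependent tuple $(v,v)$, matching the bound $(q-2)^0(q-1)$. For the inductive step, I would first dispose of the case where $L = (l_i)_{i=1}^{\dm+1}$ is not in general position (Definition \ref{def:generalPosition}): some $k+1$ of the lines lie in a $k$-flat, hence every element of $T$ contains a dependent sub-tuple and is therefore itself dependent, giving at least $q^{\dm+1}$ dependent tuples, vastly exceeding $(q-2)^{\dm-1}(q-1)$.

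Henceforth assume $L$ is in general position and apply Lemma \ref{lem:sylvester} to obtain (after relabeling) a line $l_{\dm+1}$ meeting the other $l_i$'s only inside a fixed set $\{p,r\}$ of at most two points. The idea is to pick $v_{\dm+1} \in l_{\dm+1} \setminus \{p,r\}$ first---at least $q-2$ choices since $l_{\dm+1}$ has at least $q$ points---and then reduce the remaining factors to an $(\dm-1)$-dimensional instance by quotienting by $v_{\dm+1}$. Because $v_{\dm+1} \notin l_i$ for every $i \leq \dm$, the map $v \mapsto [v]$ is injective on each $l_i$ (two distinct points of $l_i$ in the same class would force $v_{\dm+1}$ onto $l_i$), so $[l_i]$ has the same cardinality as $l_i$. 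Moreover $[l_i]$ sits inside the projective line $\linspan(l_i \cup \{v_{\dm+1}\})/v_{\dm+1}$ of the quotient $\ps{q}{\dm}/v_{\dm+1} \cong \ps{q}{\dm-1}$, so $[l_i]$ is an almost-line there.

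The inductive hypothesis, applied to the $\dm$ almost-lines $[l_1],\ldots,[l_\dm]$ in $\ps{q}{\dm-1}$, produces at least $(q-2)^{\dm-2}(q-1)$ dependent tuples in $\prod_{i=1}^\dm [l_i]$. Each such quotient tuple $([v_1],\ldots,[v_\dm])$ lifts uniquely to a tuple $(v_1,\ldots,v_\dm) \in \prod_{i=1}^\dm l_i$ by injectivity, and by Claim \ref{claim:invariance-dependence} (with $v_{\dm+1}$ playing the role of $p_1$) the full tuple $(v_{\dm+1}, v_1, \ldots, v_\dm)$ is dependent in $\ps{q}{\dm}$. Multiplying by the $q-2$ choices for $v_{\dm+1}$ yields at least $(q-2)(q-2)^{\dm-2}(q-1) = (q-2)^{\dm-1}(q-1)$ dependent tuples, completing the induction. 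The delicate point I expect to require care is preserving the almost-line structure after projection: the Sylvester-Gallai guarantee is precisely what prevents two distinct $l_i$'s from projecting onto the same line through $v_{\dm+1}$ and also what ensures per-line cardinality is preserved in the quotient, so one must verify that avoiding the two exceptional intersection points is sufficient to keep both of these structural properties.
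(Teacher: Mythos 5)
Your base case and your general-position argument are essentially the paper's proof: apply Lemma~\ref{lem:sylvester} to the completions of the almost-lines, pick one of the at least $q-2$ admissible points on the distinguished almost-line, quotient by it, check that the images of the remaining $\dm$ almost-lines are again almost-lines in a space isomorphic to $\ps{q}{\dm-1}$, invoke the inductive hypothesis, and lift back via Claim~\ref{claim:invariance-dependence}; the counting $(q-2)\cdot(q-2)^{\dm-2}(q-1)$ is then correct. That part is sound and matches the paper.

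The genuine gap is in how you dispose of the non-general-position case. If $k+1$ of the lines lie in a $k$-flat, a choice of one point from each of them gives $k+1$ points whose span has dimension \emph{at most} $k$, whereas dependence of a $(k+1)$-tuple requires span dimension strictly less than $k$; so such a sub-tuple is typically independent, and the full $(\dm+1)$-tuple need not be dependent either. Concretely, for $\dm=2$ with $l_1,l_2$ supported on the same line $\ell$ and $l_3$ on a different line, any tuple with $v_1\neq v_2$ on $\ell$ and $v_3\notin\ell$ spans the whole plane and is independent, so your claim of $q^{\dm+1}$ dependent tuples fails badly. The correct handling (and the paper's) is to apply the lemma inductively \emph{inside} the offending flat: the $k+1$ almost-lines contained in a $k$-flat form an instance of the lemma in $\ps{q}{k}$ (with $k<\dm$), giving at least $(q-2)^{k-1}(q-1)$ dependent sub-tuples; every such sub-tuple stays dependent under any completion, and there are at least $q^{\dm-k}$ completions from the remaining almost-lines, so the count is at least $q^{\dm-k}(q-2)^{k-1}(q-1)\geq (q-2)^{\dm-1}(q-1)$. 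With that replacement, your argument goes through.
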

\begin{proof}
By induction on $\dm$. For $n=1$, we are in the projective line of cardinality $q+1$, the two lines in $L$ coincide except for the missing points and the dependent tuples are pairs of equal points. Thus, there are at least $q-1$ dependent tuples.

For general $\dm$, if $L$ is not in general position, use the
inductive hypothesis on the subfamily of $k$ lines not in general position:
The number of dependent tuples in that subset is at least $(q-2)^{k-2} (q-1)$, any completion of such a dependent tuple to an $(\dm+1)$-tuple is also dependent and each can be completed in at least $q^{\dm+1-k}$ ways. Thus the number of dependent tuples in $L$ is at least $q^{\dm-k+1} (q-2)^{k-2} (q-1)$.

Otherwise, consider the line in $L$ given by Lemma~\ref{lem:sylvester} (applied to the completion of each almost-line to a line), say this line is $l_1$ and let $p$ be a point in this line that is not missing from it and such
that no other line in $L$ goes through it. 
Consider the quotient of the whole space by $p$. In the quotient, the image of a point $p' \neq p$ is $[p']$, and the image of a line $l$ not 
containing 
$p$ is the union of the images of the points in $l$. As the almost-lines in $(l_i)_{i=2}^{\dm+1}$ do not contain $p$, their images in the 
quotient are also 
almost-lines. Thus the inductive hypothesis can be used on the quotient space of dimension $\dm-1$ and the $\dm$ quotient lines to conclude 
that the product of the quotient lines contains at least $(q-2)^{\dm-2} (q-1)$ dependent tuples.

Now, by the invariance of dependence (Claim~\ref{claim:invariance-dependence}), we have that there are at least $(q-2)^{\dm-2} (q-1)$ 
dependent tuples in $T$ whose 
first coordinate is $p$. Also, there are at least $q-2$ choices of $p$, so there are at least $(q-2)^{\dm-1} (q-1)$ dependent tuples overall.
%
\end{proof}

\begin{definition}\label{def:dimensionpattern}
For $Q = \prod_{i=1}^k Q_i$, a product of subsets of $\ps{q}{\dm}$, where each $Q_i$ is an almost-flat, the \emph{dimension pattern of $Q$}, denoted $\dimension Q$ is the $k$-tuple of dimensions of the $Q_i$s sorted in non-decreasing order. We will consider the partial order on dimension patterns: $(s_1, \dotsc, s_k) \preceq (t_1, \dotsc, t_k)$ iff for all $i$ we have $s_i \leq t_i$.
\end{definition}


\begin{lemma}[dependence of non-dominated almost-flats]\label{lem:almostFlat}
Let $Q = \prod_{i=1}^\dm Q_i$ be a product of subsets of $\ps{q}{\dm-1}$, where each $Q_i$ is an almost-flat.
Assume 
\begin{equation}\label{equ:dim}
\dimension Q \npreceq (0, 1, \dotsc, \dm-1).
\end{equation}
Then the fraction of
dependent tuples in $Q$ is at least
\[
\frac{1}{q+1} \left(\frac{q-2}{q+1}\right)^{\dm-1}.
\]
\end{lemma}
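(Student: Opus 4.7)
The plan is to bound the dependent fraction in $Q$ by partitioning each factor $Q_i$ into almost-lines and applying \cref{lem:lines} to the resulting sub-products. I would split the argument into two cases depending on whether every factor has positive dimension.

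First suppose every $d_i \geq 1$. For each almost-flat $Q_i = F_i \setminus G_i$ (with $G_i$ possibly empty), I would choose a base point $p_i \in G_i$ if $G_i \neq \emptyset$, and any $p_i \in F_i$ otherwise, and partition $Q_i$ via the pencil of lines through $p_i$ inside $F_i$. With this choice each nonempty part is an almost-line: a line through $p_i$ inside $F_i$ either lies in $G_i$ (giving an empty part) or meets $G_i$ only at $p_i$ (giving $L \setminus \{p_i\}$). Let $m_i$ be the number of almost-lines so produced, so $m_i \geq |Q_i|/(q+1)$. Then $Q$ decomposes into $\prod_i m_i$ sub-products of the form $l_1 \times \cdots \times l_n$ with each $l_j$ an almost-line in $\ps{q}{n-1}$. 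Applying \cref{lem:lines} (with ambient dimension $n-1$ and $n$ almost-lines) to each sub-product yields at least $(q-2)^{n-2}(q-1)$ dependent tuples per sub-product; summing and dividing by $|Q| \leq \prod_i m_i(q+1)$ gives a dependent fraction at least $(q-2)^{n-2}(q-1)/(q+1)^n$, which exceeds the target $(q-2)^{n-1}/(q+1)^n$ since $q-1 \geq q-2$.

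For the case in which some $d_i = 0$ I would use induction on $n$. The base case $n = 2$ is immediate: non-domination forces both $Q_i$ to be almost-lines in $\ps{q}{1}$, dependent tuples are the diagonal pairs in $Q_1 \cap Q_2$, and this intersection contains at least $q-1$ points. For the inductive step, WLOG $Q_1 = \{p\}$; I would pass to $\ps{q}{n-2} = \ps{q}{n-1}/\{p\}$ and consider the quotient product $[Q_2] \times \cdots \times [Q_n]$. A short Hall-type observation shows that if the original sorted pattern has $d_{(i)} \geq i$ for some $i$ (which must be $\geq 2$ since $d_{(1)} = 0$), then the sorted quotient pattern, whose entries can drop by at most $1$ from the originals, has its $(i-1)$st entry at least $i-1$, so is non-dominated in $\ps{q}{n-2}$. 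By \cref{claim:invariance-dependence} together with the uniformity furnished by \cref{claim:intersection-size}, the dependent fraction of $Q$ matches that of the quotient product, and the inductive hypothesis produces a bound of $\frac{1}{q+1}\bigl(\frac{q-2}{q+1}\bigr)^{n-2}$, larger than the target.

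The main obstacle, I expect, will be in making the second case rigorous: the image $[Q_j]$ of an almost-flat under the quotient by a point need not be an almost-flat in the narrow sense of \cref{def:almostflat}, and \cref{claim:intersection-size} provides uniformity of intersection sizes only for flats, not almost-flats. Extending these tools and verifying that the dependent fraction genuinely transfers through the quotient will require a careful case analysis based on whether $p$ lies in $F_j$ and in $G_j$. A secondary subtlety in the first case is that the choice $p_i \in G_i$ is essential: a careless choice can produce parts of the form ``line minus two points,'' which is not an almost-line per \cref{def:almostflat} and therefore ineligible for \cref{lem:lines}.
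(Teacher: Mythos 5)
Your first case (all factors of dimension at least one) is sound and essentially coincides with the paper's treatment of the minimal pattern $(1,\dotsc,1)$: partition each factor by a pencil of lines through a well-chosen base point (taking the base point inside the deleted flat is indeed the right trick to get almost-lines), then apply \cref{lem:lines} to each product of $n$ almost-lines and average. The combinatorial bookkeeping in your second case (dimensions drop by at most one under the quotient, so non-domination survives) is also fine.

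The gap is exactly the step you flag as an "obstacle," and it is not a routine verification but the crux of the lemma. Your claim that "the dependent fraction of $Q$ matches that of the quotient product" is unjustified and in general false for the straightforward quotient: when the point $p$ lies in $F_j \setminus G_j$, the equivalence classes inside $Q_j\setminus\{p\}$ have unequal sizes ($q$ for classes missing $G_j$, $q-1$ for classes meeting it), so although dependence is a class property (\cref{claim:invariance-dependence}, \cref{claim:dependence-prop-eq}), the fraction in $Q$ is a \emph{weighted} average over class-tuples while the fraction in $[Q_2]\times\dotsb\times[Q_n]$ is an unweighted one, and these can differ in either direction; \cref{claim:intersection-size} gives uniformity only for intersections with a flat, which is precisely what fails here. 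The paper explicitly warns that under the naive quotient the fraction "may increase or decrease," and its proof avoids this by a surgery step (Claim~\ref{claim:surgery}): before quotienting, each almost-flat factor is replaced by either the full flat minus $S$ (completing the partially-hit classes) or by the flat minus $S$ minus \emph{all} classes meeting the deleted flat (removing them), whichever does not increase the density of dependent tuples; a short computation with the two class-densities $\alpha,\beta$ shows one of the two choices always works, and only then are the fibers uniform so that the fraction passes to the quotient and the induction (via \cref{lem:lines}) applies. This monotone complete-or-remove idea is the missing ingredient in your proposal; without it (or an equivalent argument, e.g.\ redoing the $\alpha,\beta$ comparison for your single-point quotient), the inductive step in your second case does not go through.
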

\begin{proof}
The proof will reduce estimating the fraction in the general case to the case of lines, given by Lemma \ref{lem:lines}. We will do this by
first reducing to the case of partitions consisting of parts with \emph{minimal} dimension patterns satisfying \eqref{equ:dim} and then reducing to the case of product of lines.

The minimal dimension patterns satisfying \eqref{equ:dim} 
are the following $\dm-1$ patterns: $(1,1,\dotsc, 1)$, $(0,2, 2, \dotsc, 2)$, $(0,0,3,\dotsc, 3)$, $\dotsc$, $(0,\dotsc, 0,\dm-1,\dm-1)$. Formally, they are given by $(s_1, \dotsc, s_\dm)$ for $j=1, \dotsc,\dm-1$, where
\[
s_i =
\begin{cases}
j & i \geq j, \\
0 & i < j.
\end{cases}
\]

It suffices to prove the lemma for $Q$ with minimal dimension patterns satisfying \eqref{equ:dim}, because of the following two facts:
\begin{itemize}
\item 
A $Q$ with a non-minimal dimension pattern can be partitioned into parts with minimal dimension patterns. This is shown in the next claim.
\item
The fraction of dependent tuples in $Q$ is at least the minimum of such fractions for the parts in a partition of $Q$. 
\end{itemize}


\begin{claim}
Let $Q = \prod_{i=1}^\dm Q_i$ be a product of subsets of $\ps{q}{\dm-1}$, where each $Q_i$ is an almost-flat;
and let $Q$ satisfy \eqref{equ:dim}. Then $Q$ can be partitioned into parts with minimal dimension patterns and satisfying the assumptions of Lemma~\ref{lem:almostFlat}.
\end{claim}

\begin{proof}
Let $k+1$ be the least index $i$ such that $(\dimension Q)_i \geq i$; such an $i$ exists because of our assumption that \eqref{equ:dim} is satisfied. 
Then we claim that we can partition $Q$ into parts of the form
\begin{equation}\label{equ:minimal}
p_1 \times \dotsb \times p_k \times R_{k+1} \times \dotsb \times R_{\dm},
\end{equation}
where $p_i \in Q_i$, for $i \leq k$, are points, and $R_i \subseteq Q_i$ is an almost-flat of dimension $k+1$ for $i > k$.
We construct this partition by first partitioning individual factors in
$Q$, and then the resulting (refined) product partition of $Q$ will be our desired
partition.

Partitioning into flats of dimension 0 (points) is straightforward. For partitioning into higher dimensional parts
there are 3 cases depending on the factor being partitioned and the dimension of the target parts.
We will also assume that when we need to partition an almost flat it's of type $F_d \setminus F_{d''}$ with
$F_{d''} \subseteq F_d$. We have the following three cases. 

\begin{itemize}
\item In the first case, we want to partition a $d$-flat $F_d$ into almost-flats of dimension $d'$ for some $0 < d' \leq d$. Fix a $(d'-1)$-flat $F_{d'} \subseteq F_d$
arbitrarily, and consider the $d'$-dimensional flats $F(p) := \linspan (\{p\} \cup F_{d'})$ for
$p \in F_d \setminus F_{d'}$.  For two such points $p, p'$ we either have
$F(p) = F(p')$ or $F(p) \cap F(p') = F_{d'}$.  Thus we can construct a partition
of $F_d$ with one flat of the form $F(p)$ and almost-flats of the form
$F(p) \setminus F_{d'}$. More precisely, fix any point $p^* \in F_d \setminus F_{d'}$, then the partition is
\[
\{ F(p^*) \} \cup \{F(p) \setminus F_{d'} \suchthat p \in F_d \setminus F(p^*) \}.
\]

\item In the second case, we need to partition $F_d \setminus F_{d''}$, a $d$-flat minus a $d''$-flat, into almost-flats of dimension $d'$ for $d > d'' \geq d' > 0$. This is a slight modification of the previous argument: We fix a $(d'-1)$-flat $F_{d'} \subseteq F_{d''}$ arbitrarily and we can construct a partition
of $F_d \setminus F_{d''}$ with almost-flats of the form $F(p) \setminus F_{d'}$. The partition is
\[
\{F(p) \setminus F_{d'} \suchthat p \in F_d \setminus F_{d''} \}.
\]


\item In the third case, we need to partition $F_d \setminus F_{d''}$, a $d$-flat minus a $d''$-flat, into almost-flats of dimension $d'$ for $d > d' > d'' > 0$. This is again a slight modification of the previous argument: We arbitrarily fix a $d'-1$ dimensional flat $F_{d'} \subseteq F_{d}$ containing $F_{d''}$ and we construct a partition
of $F_d \setminus F_{d''}$ with one almost-flat of the form $F(p) \setminus F_{d''}$ and almost-flats of the form $F(p) \setminus F_{d'}$. More precisely, fix any point $p^* \in F_d \setminus F_{d'}$, the partition is
\[
\{ F(p^*) \setminus F_{d''} \} \cup \{F(p) \setminus F_{d'} \suchthat p \in F_d \setminus F(p^*) \}.
\]
\end{itemize}

Applying the above procedure to each factor $Q_i$ for $i>k$ with $d'=k$ we get the desired partition completing the proof of the claim.
\end{proof} 

To complete the proof of the lemma, we now reduce the case of minimal dimension patterns to the case of lines, which is handled by Lemma~\ref{lem:lines}. 
That lemma gives a lower bound for the fraction of dependent tuples for the product of $\dm+1$ lines in $\ps{q}{\dm}$. 
At this point in the proof we are dealing with parts as in \eqref{equ:minimal}, which have as factors $k$ points and $n-k$ almost-flats of dimension $k+1$. 
We could partition the almost-flats into lines to apply Lemma \ref{lem:lines} and ignore the first $k$ points of each tuple,
but then the lines would be living in $\ps{q}{\dm-1}$ with only $\dm-k$ lines and Lemma \ref{lem:lines} would not apply for $k\geq 1$. 
To fix this, we confine the almost-flats into a common $(\dm-k-1)$-dimensional space by ``projecting them orthogonal to $p_1, \dotsc, p_k$'', or more precisely, by taking the quotient by $S = \linspan(p_1, \ldots, p_k)$ and then appropriately modifying the $R_i$'s. We now describe this procedure. 

Let $Q$ be as in \eqref{equ:minimal}. If $(p_1, \dotsc, p_k)$ is dependent, there is nothing more to prove for this part as the fraction of dependent tuples is 1.

Otherwise, we sequentially go over $R_{k+1}, \ldots, R_{n}$ and replace them by $P_{k+1}, \dotsc,  P_{n}$ as described below. 
The new product set $Q' = p_1 \times \dotsb \times p_{k} \times P_{k+1} \times \dotsb \times P_n$ has the following properties: 
(1) $f(Q') \leq f(Q)$, where $f(Q)$ is defined to be the fraction of dependent tuples in $Q$; (2) each $P_i$ is an almost-flat; 
(3) each $P_i$ is the union of some of the equivalence classes induced by the quotient of $R_i$ by $S$. 
Thus, if we take the quotient, then each $P_i$ can be identified with an almost-flat living in a space isomorphic to $\ps{q}{\dm-k-1}$, and hence Lemma~\ref{lem:lines} is applicable after partitioning each $P_i$ into lines.

Now we explain the construction of the $P_i$'s which depends on two cases: 
(1) If $R_i$ is a flat, then set $P_i := R_i\setminus S$. 
(2) Suppose $R_i$ is an almost-flat,
i.e. $R_i = F_i \setminus F'_i$, where $F_i$ is a $(k+1)$-flat and $F'_i$ is a sub-flat of dimension at most $k$. 
Then set $P_i$ to either $F_i\setminus S$ or $(F_i\setminus S)\setminus (F'_i|_{F_i/S})$, 
whichever makes the current density of dependent tuples smaller. 
If the second option is empty, pick the first, which is never empty. (By the current density of dependent tuples we mean 
the density of dependent tuples in 
$p_1 \times \dotsb \times p_k \times P_{k+1} \times \dotsb \times P_i \times R_{i+1} \times \dotsb \times R_n$.)

We do not use the more natural choice of a straightforward quotient in the case of almost-flats (that is, $P_i = R_i/S$), as in that case the fraction of dependent tuples may increase or decrease. \details{Example?}
With our choice we will now show that the fraction of dependent tuples never increases. 

\begin{claim} \label{claim:surgery}
The fraction of dependent tuples in $Q$ is at least that in $Q'$.
\end{claim}
\begin{proof}
We will see the effect on the fraction of dependent tuples in each step of our procedure of replacing $R_i$ by $P_i$ as defined above. 
There will be several cases:
\begin{itemize}
\item If $R_i$ is a $(k+1)$-flat, then we set $P_i:= R_i \setminus S$. 
This cannot increase the fraction of dependent tuples because we removed $S$ and the tuples involving points of $S$ in the $i$'th position
are all dependent. 

\item If $R_i= F_i \setminus F'_i$, with $F_i \supset F'_i$, is an almost-flat, then we use a refinement of the previous argument. First, we replace $R_i$ by 
$R'_i = (F_i \setminus F'_i)\setminus S = (F_i \setminus S)\setminus (F'_i \setminus S)$; as before, this cannot increase the fraction of dependent 
tuples. Now we have two cases depending on the 
intersection pattern of $F'_i$ with the equivalence classes in $F_i/S$:
\begin{itemize}

\item $F_i/S=[F'_i]$, that is, $F'_i$ intersects all equivalence classes of the quotient $F_i/S$, and in this case each intersection is of the same cardinality by Claim~\ref{claim:intersection-size}. 
Therefore, by Claim~\ref{claim:dependence-prop-eq} the fraction of dependent tuples does not change when we replace $R'_i = (F_i\setminus S) \setminus (F'_i \setminus S)$ by $P_i = F_i\setminus S$.

\item $F_i/S \supsetneq [F'_i]$, that is, $F'_i$ does not intersect all equivalence classes of $F_i/S$. 
For $U \subseteq R'_i$, define $f(U)$ to be the fraction of dependent tuples in\footnote{We are overloading the function $f$ as it was
used with a different type of argument ($Q$) earlier, but this should not cause confusion.} 
\[
p_1 \times \dotsb \times p_k \times P_{k+1} \times \dotsb \times P_{i-1} \times U \times R_{i+1} \times \dotsb \times R_n.
\] 
Informally, we will either ``remove the equivalence classes intersected by $F'_i$'' or ``complete them'', whichever does not increase $f(\cdot)$.
More precisely, we will show that for one of the following choices of $P_i$ we have $f(P_i) \leq f(R'_i)$:  
set $P_i = (F_i\setminus S)\setminus (F'_i|_{F_i/S})$ (``remove''); or set $P_i = F_i\setminus S$ (``complete'').

\medskip
It remains to prove that one of these choices will not increase $f(\cdot)$. We need 
some notation. Denote the 
equivalence classes in $F_i/S$ by $C_1, \ldots, C_r, C_{r+1}, \ldots, C_{r+s}$. 
Let $c := |C_1| = |C_2| = \dotsb = |C_{r+s}|$.
Of these, $C_1, \ldots, C_r$ have nonempty intersection with $F'_i$. 
Let $c := |C_1| = |C_2| = \dotsb = |C_{r+s}|$.
By Claim~\ref{claim:intersection-size},
$|C_1 \cap F'_i| = \dotsb = |C_r \cap F'_i|$ and let us denote this common intersection size by $c'$. 
Let $\alpha$ be the fraction of dependent tuples induced by $C_1 \cup \dotsb \cup C_r = F'_i|_{F_i/S}$, and let $\beta$ be the 
fraction of dependent tuples induced by $C_{r+1} \cup \ldots \cup C_{r+s}$.
Then we have
\begin{align*}
f(R'_i) &= \frac{\alpha(c-c')r + \beta cs}{(c-c')r + cs},\\
f(F_i\setminus S) &= \frac{\alpha cr + \beta cs}{cr + cs},\\
f((F_i\setminus S)\setminus (F'_i|_{F_i/S})) &= \frac{\beta cs}{cs} = \beta.
\end{align*}
From the above expressions we see that if $\beta > \alpha$ then 
$f(F_i\setminus S) < f(R'_i)$; and if $\beta < \alpha$ then $f((F_i\setminus S)\setminus (F'_i|_{F_i/S})) < f(R'_i)$; and if $\alpha = \beta$ then either choice works. 
\end{itemize}

\end{itemize}
This completes the proof of the claim.
\end{proof}

Let $P = P_{k+1} \times \dotsb \times P_n$. 
By our construction, the fraction of dependent tuples in $Q'$ is no less than that in $P$. 
Define $P/S := (P_{k+1}/S) \times \dotsb \times (P_n/S)$, the result of taking the quotient with respect to $S$%
, where
$(P_j/S) \subseteq \FF_q P^{n-k-1}$ for $k < j \leq n$. Note that $P_j/S$ is an almost-flat. 
We have $f(P/S) = f(p_1 \times p_2 \times \dotsm \times p_k \times P)$. 
\details{is there a claim that implies the equality of both counts?}

Claim~\ref{claim:surgery} with the fact just noted implies that a lower bounding of the fraction of dependent tuples of $Q$ is given by a lower bound of the fraction of dependent tuples of a part having all factors of dimension 1 or more. 
Applying the partitioning argument from the first half of the proof once more to such a part, it is enough to lower bound the fraction of dependent tuples for a part having factors of dimension exactly
1 (minimal dimension pattern). 
The estimate in Lemma \ref{lem:lines} gives that each such part with $n-k$ factors has at least $(q-2)^{\dm-k-2} (q-1)$ dependent tuples. A part like that also has at most $(q+1)^{\dm-k}$ tuples and therefore a fraction of at least
\[
\frac{(q-2)^{\dm-k-2} (q-1) }{(q+1)^{\dm-k}}
\]
dependent tuples. As a function of $k$ only, this fraction is smallest when $k=0$, and thus it is at least
\[
\frac{(q-2)^{\dm-1} }{(q+1)^{\dm}}.
\]
We showed that this is a lower bound on the fraction of dependent tuples in $Q$. This completes the proof of the lemma.
\end{proof}

\begin{proof}[Proof (of Theorem \ref{thm:general})]
We will first estimate the fraction of dependent tuples in $(\ps{q}{\dm})^\dm$. Probabilistic language is helpful here. We consider a random tuple $T=(t_1, \dotsc, t_\dm)$ and we want an upper bound on the probability that it is dependent. Recall that the cardinality of an $i$-dimensional flat is $1+q + \dotsb + q^i$.
\begin{align*}
\Pr (T \text{ is dependent})
    &= \sum_{i=2}^\dm \Pr (\text{$(t_1, \dotsc, t_{i-1})$ is independent and $(t_1, \dotsc, t_{i})$ is dependent}) \\
    &\leq \sum_{i=2}^\dm \Pr (\text{$(t_1, \dotsc, t_{i})$ is dependent} \;|\; \text{$(t_1, \dotsc, t_{i-1})$ is independent}) \\
    &= \sum_{i=2}^\dm \frac{1+q+\dotsb + q^{i-2}}{1+q+\dotsb + q^\dm} \\
    &\leq \sum_{i=2}^\dm \frac{1}{q^{n-i+2}} \leq \sum_{i=2}^\infty \frac{1}{q^i} = \frac{1}{q(q-1)}.
\end{align*}
Thus the fraction of dependent tuples in $(\ps{q}{\dm})^\dm$ is at most $1/q(q-1)$.\footnote{This estimate is not too far from the true value: By picking the points in the tuple in sequence and considering the chance that the last point makes the tuple dependent (i.e. lies in a certain $(\dm-2)$-dimensional flat), we have that the fraction of dependent tuples is at least
\[
    \frac{q^{\dm-1}-1}{q^{\dm+1}-1} \geq \frac{1}{q^2} - \frac{1}{q^{\dm+1}}.
\]}
This and Lemma \ref{lem:almostFlat} imply that parts whose dimension pattern is not less than or equal to $(0,1, \dotsc, \dm-1)$ (non-dominated), can cover at most a
\[
\frac{1}{q(q-1)} \left( \frac{1}{q+1} \left(\frac{q-2}{q+1}\right)^{\dm-1} \right)^{-1}
\leq \frac{1}{q} \left(\frac{q+1}{q-2}\right)^{\dm}
\]
fraction of $(\ps{q}{\dm})^\dm$. The rest has to be covered with ``dominated'' parts, that is, parts whose dimension pattern is less than or equal to $(0,1, \dotsc, \dm-1)$. Any such part has cardinality at most $1 (q+1) \dotsm (q^{\dm-1} + \dotsb + 1)$. The total number of tuples to be covered by these parts is at least
\[
\left(1-\frac{1}{q} \left(\frac{q+1}{q-2}\right)^{\dm}\right) (q^\dm + \dotsb + 1)^\dm.
\]
This needs at least
\[
q^{\dm(\dm+1)/2} \left(1-\frac{1}{q} \left(\frac{q+1}{q-2}\right)^{\dm}\right)
\]
parts.
\end{proof}

\section{Acknowledgements}
We thank L\'aszl\'o Lov\'asz, Alexander Razborov, Michael Saks, Miklos Santha and David Xiao for useful discussions and pointers to the literature. We also thank an anonymous reviewer for careful reading and
useful comments.

\bibliography{sampling}
\bibliographystyle{abbrv}    

\appendix

\section{From sampling lower bound to the continuous partitioning problem}\label{sec:reduction}

A question that immediately arises when trying to prove a lower bound on sampling is that sampling is not a computational task in the usual sense of having a definite output.
A way to get around this problem is to prove a lower bounds for a problem that can be solved using sampling. An $\Omega(n)$ lower bound is easy: Consider the following set of $n$ bodies in $\RR^n$. 
For $i \in [n]$, define body $B_i = [0,1]^{i-1} \times [0,2] \times [0,1]^{n-i}$. 
In other words, $B_i$ is an axis-parallel cuboid with length $1$ along all but the $i$'th axis. 
Now consider a randomized algorithm that gets as input (via membership oracle) a uniformly randomly chosen body from the set of bodies just defined and its output is the index of the input body.
A straightforward application of Yao's min-max principle shows that any such algorithm must make $\Omega(n)$ membership queries to achieve a constant probability
of success. On the other hand, if sampling can be done with $q$ queries, then the body can be identified in $O(q)$
queries with constant probability of success: Suppose that the input body was $B_i$.  Sample a point by making $q$ queries.
With probability about $1/2$ its $i$'th coordinate is greater than $1$, thus telling us that the body is $B_i$.
We can improve the probability of success by repeating this. This gives $q = \Omega(n)$.

For a quadratic lower bound as a function of the dimension, our candidate hard algorithmic problem is the following. We are given a membership oracle for a convex body given by $\{x \in \RR^n \pip  \langle x, v_i \rangle \leq 1 \text{ for } i \in [n-1], \langle x, v \rangle \leq p(n)\}$, for $n-1$ unit vectors $v_1, \ldots, v_{n-1} \in S^{n-1}$ (the unit sphere in $\RR^n$) spanning a hyperplane, $v$ a normal to that hyperplane and $p(n)$ some fixed polynomial in $n$. The problem is to find $v$ approximately, or more precisely, a vector whose direction makes an angle with $v$ that is at most $1/\poly(n)$. As usual in algorithmic convexity, the oracle complexity of problems of this kind depends on the roundness of the input body \cite{GLS} and our problem as stated can have very high complexity as there is no a priori bound on the roundness of the input. For a meaningful worst-case lower bound for randomized algorithms one needs to restrict the input body so that it contains $r B_n$ and is contained in $R B_n$ for $R/r = \poly(n)$ (where $B_n$ is the unit ball in $\RR^n$). It's easy to show
algorithms solving the problem in this case with essentially quadratic number of queries. Yao's lemma implies that the probability of success of any randomized algorithm against the worst such input is at least the probability of success of the best deterministic algorithm against a distribution on inputs of our choice. Choosing $v_i$ uniformly and independently at random in $S^{n-1}$ and restricting the distribution to bodies satisfying the roundness condition is a natural option.  But it seems cleaner to just choose $v_i \in S^{n-1}$ uniformly at random without any additional constraint, prove a lower bound for deterministic algorithms against this distribution (say, an algorithm that fails with probability at most $p$, needs to make $q$ queries) and then argue that for a suitable choice of $r$ and $R$ the fraction of the distribution that is not well-rounded is at most $p/2$. So any algorithm when running on a distribution of well-rounded bodies needs to make at least $q$ queries to fail with probability at most $\frac{p/2}{1-(p/2)} = p/(2-p)$.
As before, it is easy to see that if we can sample with $O(q)$ queries then we can find a vector whose direction is within a $1/\poly(n)$ angle of $v$ in $O(q \polylog(n))$ queries with constant probability.


The next observation is that any deterministic algorithm against our distribution can be thought of as a decision tree (if we only care about the number of queries and not the computational complexity): Every node represents a query, the children of a node represent different choices depending on the result of a query and on leafs the algorithm stops and has to output a candidate vector. The leafs induce a partition of the support of the input distribution, which can be thought to be $(S^{n-1})^{n-1}$. The algorithm succeeds with high probability if for most parts, most tuples of $n-1$ vectors in the part have their normal direction near a fixed vector that depends on the part (``most'' here according to the input distribution). It simplifies the problem somewhat to assume that the oracle gives a bit more information than just YES or NO; instead, the modified oracle answers YES when the query point is in the body (as usual), but when the query point $x \in \RR^n$ is not in the body it answers NO and gives the least index among violated constraints (that is, $\min \{ i \suchthat \abs{\inner{x}{v_i}} > 1\}$). This idea was introduced in \cite{RV} and it has the following consequence (as shown there): The partition induced by the corresponding decision tree is made of product sets, namely, every part is of the form $P_1 \times \dotsb \times P_{n-1}$ where $P_i \subseteq S^{n-1}$. Clearly a lower bound on the number of queries for algorithms with the modified oracle is a valid lower bound for the original oracle.

For any given part that is a product set, it can be shown that if the angle of localization of the normals to its tuples is forced to be small enough (say, normal directions are within an angle $1/n^{O(1)}$ of a given direction for a $1-\alpha$ fraction of the part, for a small constant $\alpha$), then most of the part lies in a narrow ``band'', that is, it satisfies the following ``band condition'': For a set of the form $P_1 \times \dotsb \times P_{n-1} \subseteq (S^{n-1})^{n-1}$, there is a vector $v \in S^{n-1}$ such that $\mu (P_i \cap \{ x \suchthat \abs{\inner{v}{x}} \leq 1/n^{O(1)} \}) \geq (1-\alpha)\mu (P_i)$ for all $i$ (where $\mu$ denotes surface area).

The previous discussion reduces the problem of proving a lower bound $\Omega(n^2/\log n)$ for sampling to the following partitioning problem:

\paragraph{Continuous partitioning problem (informal).} Suppose that $Q_1, \dotsc, Q_k$ is a partition of $(S^{n-1})^{n-1}$ where each part is a product set and satisfies the above ``band condition''. A lower bound of 
$k \geq 2^{\Omega(n^2)}$ would translate to a quadratic query lower bound for the sampling problem. (The loss of a $\log$ factor is explained by the fact that the decision tree associated to the modified oracle has fan-out $\Theta(n)$).


A natural approach to solving the partitioning problem is to try to discretize the problem perhaps by subdividing the sphere into sufficiently small cells, and then working with these cells as atoms.
However, we found the discretization considered in this paper cleaner and more useful to work with.
Although we do not have a formal connection between the two problems they have very similar flavor and
insights from the
discrete version can be directly useful for the continuous version; for example, the partition 
in the proof of Theorem~\ref{thm:upper} translates into a non-trivial partition of $(S^{n-1})^{n-1}$ 
satisfying the band condition above. 
We now briefly describe the construction of this partition. 
We first give an infinite size 
partition which is essentially the one in the proof of Theorem~\ref{thm:upper} 
except that now we are working over the real field: The parts
are of the form $P_1 \times P_2 \times \dotsb \times P_{n-1}$. The first factor $P_1 \subset S^{n-1}$ 
is a point and its antipode in $S^{n-1}$ (this corresponds to a single point in the projective space). 
The second factor $P_2 \subset S^{n-1}$ is obtained from a great circle $C$ in $S^{n-1}$ containing 
$P_1$; factor $P_2$ is either $C$ itself or $C \setminus P_1$ (this corresponds to a line in the projective space). 
Factor $P_3$ is obtained from the intersection of $S^{n-1}$ with a 3-dimensional subspace 
of $\RR^{n}$
containing $P_2$ (this corresponds to a plane in the projective space), and so on. 
To turn this into a finite partition, we ``fatten'' each factor by $1/p'(n)$, 
where the polynomial $p'(n)$ is related to the precision with which we have to determine $v$, the normal
to $v_1, \ldots, v_{n-1}$. For points, this fattening is achieved by subdividing $S^{n-1}$ into 
regions of diameter at most $1/p'(n)$. For a given $P_1$, the second factor $P_2$ is obtained by similarly partitioning 
$S^{n-1}$ into a finite number of regions such that for each region there is a great circle with every
point in the region within distance $1/p'(n)$ from the great circle, and one of these regions 
contains $P_1$ and the others are disjoint from $P_1$. We proceed similarly for higher dimensional factors.


\end{document}